\let\ALPHABET\mathcal
\let\POLICY  \mathbf
\def\VEC#1%
\def\naturalnumbers{\mathds{N}}
\def\positivenumbers{\mathds{Z}_{+}}
\def\DEFINED{\coloneqq}
\def\BYDEFINITION{\eqqcolon}
\def\EXP{\mathop{\mathds{E}\kern0pt}\nolimits}
\def\IND{\mathop{\mathds{1}\kern0pt}\nolimits}
\def\D  {\mathord{\mathrm{d}\kern0pt}}
\def\PSP{\@ifstar\PSP@ii\PSP@i}
\def\PSP@i#1{\Delta(\ALPHABET #1)}
\def\PSP@ii#1{\Delta(#1)}
\def\PR {\mathop{\mathds{P}\kern0pt}\nolimits}
\def\gobble#1{}
\newtheorem{theorem}        {Theorem}
\newtheorem{proposition}    {Proposition}
\newtheorem{lemma}          [proposition] {Lemma}
\newtheorem{problem}        {Problem}
\newtheorem{definition}     {Definition}
\def\env@cases{%
  \let\@ifnextchar\new@ifnextchar
  \left\lbrace
  \def\arraystretch{1}%
  \array{@{}l@{\quad}l@{}}}
\begin{document}

\title {Optimal decentralized control of coupled subsystems with control sharing}
\author{Aditya Mahajan%
        \thanks{Aditya Mahajan is with the Department of Electrical and Computer Engineering, 
                McGill University, Montreal, QC H3A 2A7, Canada. 
                \texttt{aditya.mahajan@mcgill.ca}}
        \thanks{Preliminary version of this paper appeared in the proceedings of
          the 48th Allerton conference on communication, control, and
          computation, 2010 and the 50th IEEE conference on decision and
          control, 2011.}
       \\
       \small\today}

\maketitle

\vspace*{-2\baselineskip}

\begin{abstract}
  Subsystems that are coupled due to dynamics and costs arise naturally in
  various communication applications. In many such applications the control
  actions are shared between different control stations giving rise to a
  \emph{control sharing} information structure. Previous studies of
  control-sharing have concentrated on the linear quadratic Gaussian setup and a
  solution approach tailored to continuous valued control actions. In this paper a
  three step solution approach for finite valued control actions is presented.
  In the first step, a person-by-person approach is used to identify redundant
  data or a sufficient statistic for local information at each control station.
  In the second step, the common-information based approach of Nayyar et al.\
  (2011) is used to find a sufficient statistic for the common information
  shared between all control stations and to obtain a dynamic programming
  decomposition. In the third step, the specifics of the model are used to
  simplify the sufficient statistic and the dynamic program. As an example, an
  exact solution of a two-user multiple access broadcast system is presented.
\end{abstract}


\section {Introduction}

\subsection {Motivation}

In this paper, we investigate a modular architecture for networked
control systems that consists of a collection of dynamically coupled subsystems, each with a
local control station. Each control station observes, either fully or
partially, the state
of its subsystem, but does not observe the state of other
subsystems.\footnote{In the formal description of the model
  Section~\ref{sec:model}, we also assume that, in addition to the local
  observations, the control stations also observe a shared state. With this
  slight generalization, the model can capture more applications, but does not
  add any additional conceptual difficulties. For that reason, we do not include
the shared state in the discussion in this section.} In addition, each control
station observes the control action of all other control stations with one-step
delay. Such a \emph{control sharing} happens naturally in applications
like multi-access broadcast~\cite{HluchyjGallager:1981, OoiWornell:1996} (see
Section~\ref{sec:MAB}), paging and registration in mobile cellular
systems~\cite{HajekMitzelYang:2008}, real-time communication with
feedback~\cite{WalrandVaraiya:1983}, and sensor networks~\cite{Veeravalli:1993}.

Each control station affects the state evolution of each substation; thus the
substation have coupled dynamics. The per unit cost depends on the state of all
subsystems and the control actions of all control stations; thus the control
stations are coupled through cost. No control station knows the
information available to other control stations. Hence, the system has a
non-classical information structure~\cite{Witsenhausen:1971, Ho:1980}.

Each control station has a perfect recall, that is, it chooses a control action
based on the history of its observations and control actions. Since the domain
of the control laws increases with time, we need to find a time-homogeneous
sufficient statistic for the past data at each controller to pose and solve the
infinite horizon optimal control problem. Finding such
sufficient statistics is difficult due to the non-classical nature of
information. For systems with classical information, a sufficient statistic at a
control station captures the affect of past data (at that control station) on
future estimation (at that control station). This feature is called the \emph{dual
affect} of control. For systems with non-classical information, in addition to the
above,  a sufficient statistic at a control station must capture the affect of
past control actions (at that control station) on the future estimation \emph{at
other control stations}. This feature is called the \emph{triple affect} of
control---the third affect being the \emph{signaling} affect. The control
sharing information structure makes the signaling affect explicit; as such
solution techniques to control sharing provide insights to other non-classical
information structures where the signaling affect is implicit. 

\subsection{Literature Overview}

There are only a few general frameworks of dynamic programming for systems with
non-classical information structure: the sequential team approach for finite
horizon systems~\cite{Witsenhausen:1973}, a common-information based approach
for finite horizon systems~\cite{Nayyar:phd}, and a two-step solution approach
for two-agent finite and infinite horizon systems~\cite{Mahajan:phd}. We are
interested in a solution framework that works for multiple control stations and
extends to infinite horizon systems, and hence, these generic dynamic programming
approaches are not applicable. 

Most of the research on non-classical information structure has focused on
specific system-dynamics and/or specific information structures. We briefly
describe some of these approaches below (see~\cite{YukselBasar:2013} for a
detailed discussion.)

The special case of linear dynamics and Gaussian disturbances with specific
information structures has received considerable attention in the literature.
Examples include static teams~\cite{Radner:1962, KrainakSpeyerMarcus:1983},
partially nested teams~\cite{HoChu:1972, Gattami:phd}, stochastically nested
teams~\cite{Yuksel:2009}, and quadratic invariant
teams~\cite{BamiehVoulgaris:2005,RotkowitzLall:2006, Rantzer:2006}
We are interested in systems with non-linear dynamics. More
importantly, the control-sharing information structure is neither static, 
nor partially nested, nor stochastically nested, nor quadratically invariant.

The special case of non-classical information structure with specific data sharing 
between the control stations has also received considerable attention in the
literature. Examples include delayed-state
observation~\cite{AicardiDavoliMinciardi:1987}, delayed (observation)
sharing~\cite{Witsenhausen:1971, VaraiyaWalrand:1978,
NayyarMahajanTeneketzis:2011}, control sharing~\cite{Bismut:1972,
SandellAthans:1974}, periodic sharing~\cite{OoiVerboutLudwigWornell:1997},
belief sharing~\cite{Yuksel:2009}, and partial history
sharing~\cite{NayyarMahajanTeneketzis:2012}.
Out of these, the models closest to our setup are control-sharing and partial
history sharing.

As described earlier, in a control-sharing information structure, control
stations can directly signal to one another through their control actions. This
signaling aspect was exploiting in~\cite{Bismut:1972, SandellAthans:1974} by
explicitly embedding the local observations in the control actions with
arbitrary small perturbation of the control action. Their embedding technique relies on:
(i)~real-valued random variables have infinite information (in an information
theoretic sense); and (ii)~the existence of measurable bijections between
Euclidean spaces. Such an embedding of observations converts the control sharing
information structure to a one-step delayed (observation) sharing information
structure, which is also a partially nested information structure. Then, the
solution techniques for partially nested teams give an approximate solution for
the control-sharing information structure~\cite{SandellAthans:1974}. However,
our motivation for investigating these models comes from communication
networks, most of which have finite valued control actions.\footnote{Even
  otherwise, the assumption of noiseless sharing continuous valued control
  actions is not realistic in communication applications because it requires
  infinite capacity communication channels.} Embedding observations in finite
valued control actions is not possible. Hence, the solution technique
of~\cite{Bismut:1972, SandellAthans:1974} does not work for finite valued action
spaces.

In a system with control sharing, each control station knows part of the history
of data at all control stations. Thus, a control-sharing information structure
is also a \emph{partial history sharing} information structure, for which the
following solution approach is proposed in~\cite{NayyarMahajanTeneketzis:2012}.
Split the data available at each control station into two parts: a common
information part that is commonly shared amongst all control stations, and a
local information part that consists of the remaining data. Then, the
decentralized stochastic control problem is equivalent to a centralized
centralized stochastic control problem in which a fictitious \emph{coordinator}
observes the common information and chooses functions that map the local
information at each control station to its action. This solution approach
extends to infinite horizons only when the location information is not
increasing with time---which is not the case in the above model because the local
information, which is the history of local state observations, is increasing
with time.
Hence, the solution approach of~\cite{NayyarMahajanTeneketzis:2012} is not
directly applicable to control sharing information structures.

\subsection {Contributions of the paper}

One of the main difficulties in obtaining dynamic programming decomposition for
decentralized stochastic control is to identify sufficient statistics (for each
control station) that summarize the affect of the history of their observations
and actions on future observations and cost. In this paper, we present a three
step approach to find such sufficient statistics for decentralized control of
dynamically coupled subsystems with control sharing. 

In the first step, we use a person-by-person approach and identify either
irrelevant data or a sufficient statistic for part of the data at each control
station. In the second step, we use the common information approach
of~\cite{NayyarMahajanTeneketzis:2012} and identify sufficient statistic for the
common information at all control stations. In the third step, we use the
salient features of the model---full or partial observation of local states,
dynamic coupling using control actions, and sharing of control actions---to
simplify the sufficient statistic obtained in the second step. Using the
sufficient statistics of the second step (and their simplification in the third
step), we obtain a dynamic programming decomposition which can be extended to
infinite horizon discounted cost setup. Such a dynamic programming decomposition
is not possible by using either the person-by-person approach or the
common-information approach alone.

We use the proposed solution approach to obtain a dynamic programming
decomposition for a multiuser broadcast channel, and analytically solve the
dynamic program when both users have the same arrival rates. Although this
example is very well studied, this is the first result that provides a dynamic
programming decomposition for this model.

The rest of this paper is organized as follows. We present two models for
coupled subsystems with control sharing in Section~\ref{sec:coupled}: 
the full and partial observation models. We present the above described three
step approach for the full observation model in Section~\ref{sec:full} and for
the partial observation model in Section~\ref{sec:partial}. We present an
example of a two-user multiaccess broadcast channel in Section~\ref{sec:MAB} and
conclude in Section~\ref{sec:conclusion}.

\subsection {Notation}

Random variables are denoted with upper case letters ($X$, $Y$, etc.), their
realization with lower case letters ($x$, $y$, etc.), and their space of
realizations by script letters ($\ALPHABET X$, $\ALPHABET Y$, etc.). Subscripts
denote time and superscripts denote the subsystem; e.g. $X^i_t$ denotes the
state of subsystem~$i$ at time~$t$.
The short hand notation $X^i_{1:t}$ denotes the vector $(X^i_1, X^i_2, \dots,
X^i_t)$. Bold face letters denotes the collection of variables at all
subsystems; e.g., $\VEC X_t$ denotes $(X^1_t, X^2_t, \dots X^n_t)$. The
notation $\VEC X^{-i}_t$ denotes the vector $(X^1_t, \dots, X^{i-1}_t,
X^{i+1}_t, \dots, X^n_t)$. 

$\PSP X$ denotes the probability simplex on the space $\ALPHABET X$.
$\PR(A)$ denotes the probability of an event $A$, and $\EXP[X]$ denotes the
expectation of a random variable $X$. $\IND[x=y]$ denotes the indicator function
of the statement $x=y$, i.e., $\IND[x=y] = 1$ if $x=y$ and $0$ otherwise. 
Let $\naturalnumbers$ denote the set of
natural numbers and $\positivenumbers$ denote the set of non-negative integers.  

\section {Coupled subsystems with control sharing} \label{sec:coupled}
\subsection {Model and Problem Formulation}
\label{sec:model}


\subsubsection*{System components}

Consider a discrete-time networked control system with $n$ subsystems.
The state
$(Z_t, X^i_t)$ of subsystem $i$, $i=1,\dots,n$, has two components: a
\emph{local state} $X^i_t \in \ALPHABET X^i$ and a \emph{shared state} $Z_t \in
\ALPHABET Z$, which is identical for all subsystems. The initial shared state
$Z_1$ has a distribution $P_Z$. Conditioned on the initial shared state $Z_1$,
the initial local state of all subsystems are independent; initial local state
$X^i_1$ is distributed according to $P_{X^i|Z}$, $i=1,\dots,n$. Let $\VEC X_t
\DEFINED (X^1_t, \dots, X^n_t)$ denote the local state of all subsystems.

A control station is co-located with each subsystem. Let $U^i_t \in \ALPHABET
U^i$ denote the control action of control station~$i$ and $\VEC U_t \DEFINED
(U^1_t, U^2_t, \dots, U^n_t)$ denote the collection of all control actions. 

\subsubsection* {System dynamics}
The shared and the local state of each subsystems 
are coupled through the control actions; the shared state evolves according to
\begin{equation} \label{eq:shared}
  Z_{t+1} = f^0_t(Z_t, \VEC U_t, W^0_t)
\end{equation}
while the local state of subsystem~$i$,
$i=1,\dots,n$, evolves according to:
\begin{equation}\label{eq:dynamics}
  X^i_{t+1} = f^i_t(Z_t, X^i_t, \VEC U_t, W^i_t)
\end{equation}
where $W^i_t \in \ALPHABET
W^i$, $i=0,1,\dots,n$, is the plant disturbance with distribution $P_{W^i}$. The processes
$\{W^i_t, t=1,\dots\}$, $i=0,1,\dots,n$, are assumed to be independent across
time, independent of each other, and also independent of the initial state $(Z_1,
\VEC X_1)$ of the system.

Note that the updated local state of subsystem~$i$ depends only on the previous
local state of subsystem~$i$ and previous shared state but is controlled by all
control stations. 

\subsubsection*{Observation models and information structures}

We consider two observation models that differ in the observation of the
location state $X^i_t$ at control station~$i$. In the first model, called
\emph{full observation model}, control station~$i$ perfectly observes the local
state $X^i_t$; in the second model, called \emph{partial observation model},
control station~$i$ observes a noisy version $Y^i_t \in \ALPHABET Y^i$ of the
local state $X^i_t$ given by
\begin{equation}
  Y^i_t = \ell^i_t(X^i_t, \tilde W^i_t)
\end{equation}
where $\tilde W^i_t \in \tilde {\ALPHABET W}^i_t$ is the observation noise
with distribution $P_{\tilde W^i}$. 
The processes $\{\tilde W^i_t$, $t=1,\dots,\}$, $i=1,\dots,n$ are assumed to
be independent across time, independent of each other, independent of
$\{W^i_t$, $t=1,\dots,n\}$, and independent of the initial states $(\VEC
X_1, Z_1)$. 

In both models, in addition to the local measurements of the state of its
subsystem, each control station perfectly observes the shared state $Z_t$
and the one-step delayed control actions $\VEC U_{t-1}$ of all control stations.
The control stations perfectly recall all the data they observe. Thus, in
the full observation model, control station~$i$ chooses a control action
according to
\begin{equation}\label{eq:M1}
  U^i_t = g^i_t(Z_{1:t}, X^i_{1:t}, \VEC U_{1:t-1})
\end{equation}
while in the partial observation model, it chooses a control action according to
\begin{equation}\label{eq:M2}
  U^i_t = g^i_t(Z_{1:t}, Y^i_{1:t}, \VEC U_{1:t-1}).
\end{equation}

The function $g^i_t$ is called the \emph{control law} of control station~$i$.
The collection $\POLICY g^i \DEFINED (g^i_1, g^i_2, \dots, g^i_T)$ of control
laws at control station~$i$ is called the \emph{control strategy of control
station~$i$}. The collection $\POLICY g \DEFINED (\VEC g^1, \VEC g^2, \dots,
\VEC g^n)$ of control strategies of all control stations is called the
\emph{control strategy of the system}. 

\subsubsection*{Cost and performance}

At time $t$, the system incurs a cost $c_t(Z_t, \VEC X_t, \VEC U_t)$ that
depends on the shared state, the local state of all subsystems, and the actions
of all control stations. Thus, the subsystems are also coupled through cost. 

The system
runs for a time horizon $T$. 
The performance of a control strategy $\POLICY g$ is measured by the
expected total cost incurred by that strategy, which is given by
\begin{equation}\label{eq:cost}
  J(\POLICY g) \DEFINED \EXP\Big[ \sum_{t=1}^T c_t(Z_t, \VEC X_t, \VEC U_t) \Big]
\end{equation}
where the expectation is with respect to a joint measure of $(Z_{1:T}, \VEC X_{1:T}, \VEC
U_{1:T})$ induced by the choice of the control strategy $\VEC g$.

We are interested in the following optimal control problem:
\begin{problem} \label{prob:finite}
  Given the distributions $P_Z$, $P_{X^i|Z}$, $P_{W^i}$, $P_{\tilde W^i}$ of the initial shared
  state, initial local state, plant disturbance of subsystem~$i$, and
  observation noise of subsystem~$i$ (for the partial observation model), $i=1,\dots,n$, a horizon
  $T$, and the cost functions $c_t$, $t=1,\dots, T$, find a control strategy
  $\VEC g$ that minimizes the expected total cost given by~\eqref{eq:cost}.
\end{problem}

\subsection {Applications in communication networks} \label{sec:applications}

Control-sharing information structure arises naturally in communication
networks, as is illustrated by some applications described below. 

\subsubsection {Paging and registration in cellular networks}

Consider a mobile cellular network consisting of two controllers: a network
operator and a mobile station. The local state $X^1_t$ of the network operator
is a constant and the local state $X^2_t$ of the mobile station is its current
location that changes in a Markovian manner. We will describe the shared state
$Z_t$ later. The control action $U^1_t$ of the network operator is a permutation
of $\ALPHABET X^2$, the set of all possible locations of the mobile, and denotes the order in which mobile station will be
searched if there is a paging request. The control action $U^2_t$ of the mobile
station is either $X^2_t$ (indicating that the mobile station registers with the
network) or $\mathsf{NR}$ (indicating the mobile station does not register). 

At each time, the network may get an exogenous \emph{paging} request to seek the
location of the mobile station. If a paging request is received (denoted by $P_t
= 1$), the cost of searching is given by the index $i(X^2_t, U^1_t)$ of $X^2_t$
in $U^1_t$. If no paging request is received (denoted by $P_t = 0$) and the
mobile station registers with the network, a registration cost of $r$ is
incurred. The process $P_t$ is a binary-valued Markov process. If either the
mobile station is paged or the mobile station registers with the network, the
network operator learns the current location of the mobile station. Let $M_t$
denote the time since the last paging or registration request and $S_t$ denote
the location of the mobile station at that time. Then $Z_t = (P_t, M_{t-1},
S_{t-1})$ is the shared state of the system.

The above model corresponds to the model of paging and registration in mobile
cellular network considered in~\cite{HajekMitzelYang:2008}. The control action
$U^1_t$ of the network is based on information known to the mobile station,
hence $U^1_t$ is effectively observed at the mobile station. The control action
$U^2_t$ of the mobile station is communicated to the network operator. Hence,
this system has control-sharing information structure. 

\subsubsection {Real-time communication}

Consider a real-time communication system consisting of an encoder and a
decoder. The encoder observes a first-order Markov source $S_t$. 
The local state $X^1_t$ of the encoder is $(S_{t-1}, S_t)$ and the local state
$X^2_t$ of the decoder is a constant. The shared state $Z_t$ is also a constant.
The control action $U^1_t$ of the encoder is a quantization symbol that is
communicated to the decoder. The control action $U^2_t$ of the decoder is an
estimate of the one-step delayed source $S_{t-1}$ of the encoder. The cost at
each time is given by a distortion between $S_{t-1}$ and $U^2_t$. 

The above model corresponds to the model of real-time communication considered
in~\cite{Witsenhausen:1979} (specialized to infinite memory). The control action
$U^1_t$ of the encoder is communicated to the decoder. The control action
$U^2_t$ of the decoder is based on the information known to the encoder, hence
$U^2_t$ is effectively observed at the encoder. Hence, this system has
the full observation model considered above with shared state $Z_t = \emptyset$.

\subsubsection {Multiaccess broadcast}

Consider a two-user multiaccess broadcast system. At time $t$, $W^i_t \in
\{0,1\}$ packets arrive at each user according to independent Bernoulli
processes with $\PR(W^i_t = 1) = p^i$, $i=1,2$. Each user may store only $X^i_t
\in \{0,1\}$ packets in a buffer. If a packet arrives when the user-buffer is full,
the packet is dropped. 

Both users may transmit $U^i_t \in \{0,1\}$ packets over a shared broadcast
medium. A user can transmit only if it has a packet, thus $U^i_t \le X^i_t$. If
only one user transmits at a time, the transmission is successful and the
transmitted packet is removed from the queue. If both users transmit
simultaneously, packets ``collide'' and remain in the queue. Thus, the state
update for user~1 is given by
\( X^1_{t+1} = \max( X^1_t + U^1_t \cdot (1- U^2_t) + W^1_t, 1). \)
The state update rule for user~2 is symmetric dual of the above.  

Instead of costs, it is more natural to work with rewards in this example. The
objective is to maximize throughput, or the number of successful packet
transmissions. Thus, the per unit reward is $c(\VEC x, \VEC u) = u^1 \oplus
u^2$, where $\oplus$ means binary XOR. 

When the arrival rates at both users are the same ($p^1 = p^2$), the above model
corresponds to the two-user multiaccess broadcast system considered
in~\cite{HluchyjGallager:1981, OoiWornell:1996, MahajanNayyarTeneketzis:2008}.
Slight variation of the above model were considered in~\cite{Schoute:1978,
VaraiyaWalrand:1979}. In recent
years, the two-user multiaccess broadcast system with asymmetric arrivals ($p^1
\neq p^2$) has been used as a benchmark problem for decentralized stochastic
control problems in the artificial intelligence
community~\cite{HansenBernsteinZilberstein:2004, SzerCharpilletZilberstein:2005,
SzerCharpillet:2005, SeukenZilberstein:2007, DibangoyeMouaddibCahibdraa:2008}.

Due to the broadcast nature of the communication channel, each user observes the
transmission decision of the other user. Hence the system has the full observation model
considered above with shared state $Z_t = \emptyset$. We will revisit this
model in Section~\ref{sec:MAB}.

\section {Main result for the full observation model} \label{sec:full}

In this section, we derive structure of optimal control laws and a dynamic
programming decomposition for the full observation model. As stated in the introduction, the
full observation model has a partial history sharing information
structure~\cite{NayyarMahajanTeneketzis:2012}. Nayyar \emph{et al.}\ proposed a
\emph{common information based} approach to design systems with partial
information sharing. According to their approach, the design of optimal control
strategies is investigated from the point of view of a coordinator that observes
the shared common information. In the full observation model, the shared common
information $C_t = (Z_{1:t}, \VEC U_{1:t-1})$, and the private local information
is $L^i_t = \{X^i_{1:t}\}$. According to \cite{NayyarMahajanTeneketzis:2012}, the
posterior probability $\PR(Z_t, \VEC L_t \mid C_t)$ is a sufficient
statistic for the shared common information $C_t$.

However, directly using the above approach is not useful for the full
observation model because the local information $L^i_t$ at
control station~$i$, $i=1,\dots,n$, is increasing with time, which causes the dimension of
the sufficient statistic $\PR(Z_t, \VEC L_t \mid C_t)$ to increases with time;
and therefore, $\PR(Z_t, \VEC L_t \mid C_t)$ does not work as a sufficient
statistic for infinite horizon setup.

In this paper, we present the following three step approach to simplify the
structure of the control laws and derive a dynamic programming decomposition
(that extends to the infinite horizon setup).

\begin{enumerate}
  \item Use a person-by-person approach to show that the past values of the
    local state $X^i_{1:t-1}$ are irrelevant at control station $i$ at time $t$.
    Thus, for any control strategy of control station~$i$ that uses $(X^i_{1:t},
    Z_{1:t}, \VEC U_{1:t-1})$, we can choose a control strategy that uses only
    $(X^i_t, Z_{1:t}, \VEC U_{1:t-1})$ without any loss in performance.

  \item When attention is restricted to control strategies of the form derived
    in Step~1, the local information $L^i_t = \{X^i_t\}$ at control station~$i$
    does not increase with time. Thus, using the results
    of~\cite{NayyarMahajanTeneketzis:2012}, we can show that $\Pi_t =
    \PR(\VEC X_t, Z_t \mid C_t)$ is a sufficient statistic for the common
    information $C_t$ and is also an information state for dynamic programming. 

  \item Using the system dynamics, show that $\Pi_t$ defined in Step~2 is
    equivalent to $(Z_t, \VEC \Theta_t)$, where $\VEC \Theta_t = (\Theta^1_t, \dots,
    \Theta^n_t)$ and $\Theta^i_t = \PR(X^i_t \mid C_t)$. Using this equivalence, we
    can simplify the structural result and dynamic programming decomposition of
    Step~2.
\end{enumerate}

Now, we describe each of these steps in detail. For simplicity of exposition, we
assume that $\ALPHABET Z$, $\ALPHABET X^i$, $\ALPHABET U^i$, and $\ALPHABET
W^i$, $i=1,\dots,n$, are finite. The results extend to general alphabets under
suitable technical conditions (similar to those for centralized stochastic
control~\cite{HernandezLermaLasserre:1996}).

\subsection*{Step 1: Shedding of irrelevant information}

In this section, we show that the past values of local state
$X^i_{1:t-1}$ are irrelevant at control station~$i$ at time~$t$, $i=1,\dots,n$.
In particular:
\begin{proposition}\label{prop:structure}
  In the full observation model, restricting attention to control laws of the form
  \begin{equation} \label{eq:controller-irrelevant}
    U^i_t = \bar g^i_t(X^i_t, Z_{1:t}, \VEC U_{1:t-1})
  \end{equation}
  at all control stations~$i$, $i=1,\dots,n$, is without loss of optimality.
\end{proposition}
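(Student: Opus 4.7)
The plan is a person-by-person optimality argument: fix arbitrary strategies $g^j$, $j\neq i$, of the original form~\eqref{eq:M1}, and show that there exists an optimal response $g^i$ at control station~$i$ which depends on its information $H^i_t \DEFINED (X^i_{1:t}, Z_{1:t}, \VEC U_{1:t-1})$ only through $(X^i_t, Z_{1:t}, \VEC U_{1:t-1})$. Because the argument for $i$ makes no structural assumption on $g^{-i}$, applying it sequentially to $i=1,\ldots,n$ yields the proposition.

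The key structural ingredient, which I would establish first, is the conditional-independence identity
\begin{equation*}
  \PR(\VEC X_{1:t}=\VEC x_{1:t} \mid C_t) = \prod_{j=1}^{n} \PR(X^j_{1:t}=x^j_{1:t} \mid C_t),
  \qquad C_t \DEFINED (Z_{1:t}, \VEC U_{1:t-1}),
\end{equation*}
valid under every strategy profile. To prove it I would write out the joint density of $(Z_{1:t}, \VEC X_{1:t}, \VEC U_{1:t-1})$ and exhibit the factorization explicitly: the initial local states $X^j_1$ are conditionally independent given $Z_1$ by assumption; the transition $P(x^j_{s+1}\mid z_s, x^j_s, \VEC u_s)$ of each subsystem depends on the remaining subsystems only through the common variables $(Z_s, \VEC U_s)$; the noises $\{W^j_s\}$ are independent across $j$ and $s$; and the deterministic control constraints $u^j_s = g^j_s(x^j_{1:s}, c_s)$ also split into one factor per $j$. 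Dividing by $\PR(C_t)$ (which is free of $\VEC x_{1:t}$) preserves the factorization.

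Using this identity I would then show that, for any $u^i\in\ALPHABET U^i$,
\begin{equation*}
  \EXP\Big[\sum_{s=t}^{T} c_s(Z_s, \VEC X_s, \VEC U_s) \,\Big|\, H^i_t, U^i_t = u^i\Big]
\end{equation*}
depends on $H^i_t$ only through $(X^i_t, C_t)$. For the $s=t$ term this is immediate: $Z_t\in C_t$, both $X^i_t$ and $U^i_t$ are given, and by the factorization $(\VEC X^{-i}_t, \VEC U^{-i}_t)$ is conditionally independent of $X^i_{1:t-1}$ given $C_t$, so its conditional law given $H^i_t$ equals its conditional law given $C_t$. For later stages, the new state $(Z_{t+1}, \VEC X_{t+1})$ is produced from $(Z_t, \VEC X_t, \VEC U_t)$ by fresh independent noises $W^0_t, W^1_t,\ldots, W^n_t$, and the other stations' future control actions remain functions only of their own local state histories and the common information; a backward induction over $s=T, T-1,\ldots,t$ then propagates the reduced dependence. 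Picking, for each $(x^i, c)$, an action that minimizes the resulting cost-to-go defines the reduced control law $\bar g^i_t$ of the form~\eqref{eq:controller-irrelevant}, completing the person-by-person step.

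The main obstacle I anticipate is the bookkeeping in the conditional-independence lemma: it must be proved \emph{without} presuming any $g^j$, $j\neq i$, has already been reduced, since the proposition is obtained by iterating the one-agent argument starting from an arbitrary strategy profile. Once the factorization across subsystems (conditional on $C_t$) is nailed down, the remainder is a routine sufficient-statistic/dynamic-programming argument for the single-agent problem faced by station~$i$ and introduces no new subtleties.
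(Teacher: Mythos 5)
Your proposal is correct and follows essentially the same route as the paper: your conditional-independence factorization is the paper's Proposition~\ref{prop:independent}, your backward-induction sufficient-statistic step for the single agent is the content of the paper's Lemmas~\ref{lemma:CMP} and~\ref{lemma:structure} (which recast station~$i$'s problem, for arbitrary fixed $\VEC g^{-i}$, as a Markov decision process with state $R^i_t = (X^i_t, Z_{1:t}, \VEC U_{1:t-1})$ and invoke standard MDP theory), and the cyclic application across stations is the paper's concluding argument verbatim. The only cosmetic difference is that you condition on $C_t = (Z_{1:t}, \VEC U_{1:t-1})$ rather than on $(Z_{1:t}, \VEC U_{1:t})$ as in Proposition~\ref{prop:independent}, but both versions follow from the same factorization and the paper itself uses both in Appendix~\ref{app:CMP}.
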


A priori, it is not obvious that the past data is irrelevant. Suppose we pick any control
station~$i$, $i=1,\dots, n$; arbitrarily fix the control strategy of all control
stations except station~$i$ and consider the subproblem of finding the optimal
control strategy at control station~$i$. In principle, the history $X^i_{1:t-1}$
of local states at control station~$i$ may give some information about the
history $X^j_{1:t-1}$ of local states at control station~$j$, $j\neq i$; and
hence, may help in predicting the future control actions of control station~$j$.
The following proposition shows that this is not the case. Conditioned on the shared
observations $(Z_{1:t}, \VEC U_{1:t})$, the local state processes $\{X^i_t$,
$t=1,\dots\}$, $i=1,\dots,n$, evolve independently. 

\begin{proposition} \label{prop:independent}
  In the full observation model, the local states of all subsystems are
  conditionally independent given the history of shared state and control
  actions. Specifically, for any
  realization $z_t \in \ALPHABET Z$, $x^i_t \in \ALPHABET X^i$ and $u^i_t \in \ALPHABET U^i$ of $X^i_t$
  and $U^i_t$, $i=1,\dots,n$, $t=1,\dots,T$, we have
  \begin{equation}
    \PR(\VEC X_{1:t} = \VEC x_{1:t} \,|\, Z_{1:t} = z_{1:t}, \VEC U_{1:t} = \VEC u_{1:t}) 
    = \prod_{i=1}^n 
    \PR(X^i_{1:t} = x^i_{1:t} \,|\, Z_{1:t} = z_{1:t}, \VEC U_{1:t} = \VEC u_{1:t})
  \end{equation}
\end{proposition}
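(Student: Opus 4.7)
The plan is to prove the statement by induction on $t$, exploiting three structural features of the full observation model: (i) the initial independence of local states given $Z_1$, (ii) the form of the dynamics, in which $X^i_{t+1}$ depends only on subsystem $i$'s own previous local state $X^i_t$ together with the shared quantities $(Z_t, \VEC U_t)$ and an independent private noise $W^i_t$, and (iii) the form of the control law~\eqref{eq:M1}, in which $U^i_t$ is determined by $X^i_{1:t}$ together with the shared quantities $(Z_{1:t}, \VEC U_{1:t-1})$.

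For the base case $t=1$, the modelling assumption that $X^i_1$ are conditionally independent given $Z_1$ (with conditional law $P_{X^i \mid Z}$) gives the desired factorization before conditioning on $\VEC U_1$. Since $U^i_1 = g^i_1(Z_1, X^i_1)$ is a deterministic function involving only subsystem $i$'s local state and the shared state, the joint conditional density of $(\VEC X_1, \VEC U_1)$ given $Z_1$ factors as a product of $n$ terms, the $i$-th of which involves only $X^i_1$ and $U^i_1$. Dividing by the marginal of $\VEC U_1$ given $Z_1$ (which also factors, by the same argument) then yields conditional independence of $X^i_1$ given $(Z_1, \VEC U_1)$.

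For the inductive step, assume the conditional factorization holds at time $t$. I would condition everything on $(Z_{1:t}, \VEC U_{1:t})$ and extend one variable at a time. First add $\VEC X_{t+1}$: since $X^i_{t+1} = f^i_t(Z_t, X^i_t, \VEC U_t, W^i_t)$ with $W^i_t$ mutually independent and independent of the past, the joint conditional law of $\VEC X_{1:t+1}$ given $(Z_{1:t}, \VEC U_{1:t})$ remains a product over $i$. Next add $Z_{t+1} = f^0_t(Z_t, \VEC U_t, W^0_t)$; since $W^0_t$ is independent of everything else, $Z_{t+1}$ is conditionally independent of $\VEC X_{1:t+1}$ given $(Z_{1:t}, \VEC U_{1:t})$, so conditioning further on $Z_{t+1}$ preserves the product structure. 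Finally add $\VEC U_{t+1}$: each $U^i_{t+1} = g^i_{t+1}(Z_{1:t+1}, X^i_{1:t+1}, \VEC U_{1:t})$ is a deterministic function that depends on the local variables of only subsystem $i$, so the same factor-and-normalize argument as in the base case promotes the factorization through the deterministic conditioning on $\VEC U_{t+1}$.

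The main obstacle I anticipate is purely bookkeeping: handling the deterministic conditioning on $\VEC U_t$ cleanly in a discrete-alphabet setting requires either working with joint probability mass functions and indicator factors $\IND[u^i_t = g^i_t(\,\cdots)]$, or else using the standard lemma that if $(A_1, \dots, A_n)$ are conditionally independent given $B$ and $h^i$ are functions with $h^i$ depending only on $A_i$ and $B$, then the $A_i$ remain conditionally independent given $B$ together with $(h^1(A_1,B), \dots, h^n(A_n,B))$. Once this is in hand, the induction reduces to a straightforward verification that each structural feature listed above preserves the product form.
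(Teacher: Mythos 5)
Your proof is correct, but it takes a genuinely different route from the paper's. You proceed by induction on $t$, propagating conditional independence forward through three sub-steps per stage (apply the dynamics, condition on $Z_{t+1}$, condition on $\VEC U_{t+1}$), invoking at each conditioning step a factor-and-normalize lemma for deterministic, locally-determined functions. The paper avoids induction entirely: it writes the \emph{unconditional} joint law $\PR(Z_{1:t}=z_{1:t}, \VEC X_{1:t}=\VEC x_{1:t}, \VEC U_{1:t}=\VEC u_{1:t})$ in closed form via the chain rule as $\bigl(\prod_{i=1}^n A^i_t B^i_t\bigr)\Gamma_t$, where $A^i_t$ collects subsystem~$i$'s control-law factors, $B^i_t$ its transition factors, and $\Gamma_t$ the shared-state factors; it then marginalizes over $\VEC x_{1:t}$ and applies Bayes rule once, obtaining explicit ratio expressions for both $\PR(\VEC x_{1:t} \mid z_{1:t}, \VEC u_{1:t})$ and each marginal $\PR(x^j_{1:t} \mid z_{1:t}, \VEC u_{1:t})$, from which the product form is immediate. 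Both arguments rest on exactly the three structural facts you list (conditionally independent initial states, per-subsystem dynamics driven by mutually independent noises, and control laws that couple subsystems only through shared data). What the paper's route buys is compactness and explicit formulas for the conditional laws, which are implicitly reused later (e.g., in step $(b)$ of the proof of Lemma~\ref{lemma:CMP}); what your route buys is modularity---it isolates which model assumption is used where, packages the deterministic conditioning into a reusable lemma, and would carry over to settings where writing the global joint law in one shot is awkward. The bookkeeping issue you flag (conditioning on $\VEC U_{t+1}=\VEC u_{t+1}$) is indeed the only delicate step, and the indicator-factor treatment you sketch resolves it; the paper's single Bayes-rule division carries the same implicit positive-probability caveat.
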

See Appendix~\ref{app:independent} for proof.
One immediate consequence of the above Proposition is the following:

\begin{lemma}\label{lemma:CMP}
  Consider the full observation model for an arbitrary but fixed
  choice of control strategy $\VEC g$. Define $R^i_t = (X^i_t, Z_{1:t}, \VEC U_{1:t-1})$.
  Then,
  \begin{enumerate}
    \item The process $\{R^i_t, t=1,\dots,T\}$ is a controlled Markov process with
      control action $U^i_t$, i.e., for any $x^i_t, \tilde x^i_t \in \ALPHABET
      X^i$, $z_t, \tilde z_t \in \ALPHABET Z$, $u^i_t, \tilde u^i_t \in
      \ALPHABET U^i$, $r^i_t = (x^i_t, z_{1:t}, \VEC
      u_{1:t-1})$, $\tilde r^i_t = (\tilde x^i_t, \tilde z_{1:t}, \tilde {\VEC u}_{1:t-1})$,
      $i=1,\dots,n$, and $t=1,\dots,T$, 
      \begin{equation*}
        \PR(R^i_{t+1} = \tilde r^i_{t+1} \,|\, R^i_{1:t} = r^i_{1:t}, U^i_{1:t} = u^i_{1:t}) 
        =
        \PR(R^i_{t+1} = \tilde r^i_{t+1} \,|\, R^i_t = r^i_t, U^i_{t} = u^i_{t})
      \end{equation*}

    \item The instantaneous conditional cost simplifies as follows:
      \begin{equation*}
        \EXP[ c_t(Z_t, \VEC X_t, \VEC U_t) \,|\,  R^i_{1:t} = r^i_{1:t}, U^i_{1:t} = u^i_{1:t}] 
        = \EXP[ c_t(Z_t, \VEC X_t, \VEC U_t) \,|\,  R^i_{t} = r^i_{t}, U^i_{t} =
        u^i_{t}]
      \end{equation*}
  \end{enumerate}
\end{lemma}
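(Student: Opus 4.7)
My plan is to prove both parts in parallel by analyzing, step by step, the conditional joint distribution of the only random variables that appear in $R^i_{t+1}$ and in $c_t(Z_t, \VEC X_t, \VEC U_t)$ but are not already determined by the conditioning event $\{R^i_{1:t} = r^i_{1:t}, U^i_{1:t} = u^i_{1:t}\}$. These extra random variables are $X^i_{t+1}$, $Z_{t+1}$, $\VEC X^{-i}_t$ and $\VEC U^{-i}_t$; everything else in $R^i_{t+1}$ or in the cost is a measurable function of the conditioning. Since $U^i_s = g^i_s(X^i_{1:s}, Z_{1:s}, \VEC U_{1:s-1})$ is a deterministic function of $R^i_{1:t}$ for every $s \le t$, conditioning on $U^i_{1:t-1}$ is redundant, and the effective conditioning set reduces to $(X^i_{1:t}, Z_{1:t}, \VEC U_{1:t-1}, U^i_t)$.

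Next, I would enlarge the conditioning by adding $\VEC U^{-i}_t$, so that the full action $\VEC U_t$ is specified, and exploit the system dynamics. Because the noises $W^0_t$ and $W^i_t$ are independent of the entire past, the conditional distribution of the pair $(Z_{t+1}, X^i_{t+1}) = (f^0_t(Z_t, \VEC U_t, W^0_t), f^i_t(Z_t, X^i_t, \VEC U_t, W^i_t))$ depends on the enlarged conditioning only through $(Z_t, X^i_t, \VEC U_t)$. In parallel, since each $U^j_t$ with $j \ne i$ is the deterministic function $g^j_t(X^j_{1:t}, Z_{1:t}, \VEC U_{1:t-1})$ and $X^j_t$ is the last coordinate of $X^j_{1:t}$, the conditional distributions of $\VEC U^{-i}_t$ and of $\VEC X^{-i}_t$ under the \emph{unenlarged} conditioning $(X^i_{1:t}, Z_{1:t}, \VEC U_{1:t-1})$ are completely determined by the conditional distribution of $\VEC X^{-i}_{1:t}$ given the same.

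The crucial step is therefore to show that $\VEC X^{-i}_{1:t}$ is conditionally independent of $X^i_{1:t}$ given $(Z_{1:t}, \VEC U_{1:t-1})$. I would establish this as a direct extension of Proposition~\ref{prop:independent}, namely
\begin{equation*}
  \PR(\VEC X_{1:t} \mid Z_{1:t}, \VEC U_{1:t-1}) = \prod_{j=1}^n \PR(X^j_{1:t} \mid Z_{1:t}, \VEC U_{1:t-1}),
\end{equation*}
which has the same product form as Proposition~\ref{prop:independent} but with one fewer step of shared actions in the conditioning. Given this conditional independence, the distribution of $\VEC X^{-i}_{1:t}$---and hence of $\VEC U^{-i}_t$ and of $\VEC X^{-i}_t$---depends on the conditioning only through $(Z_{1:t}, \VEC U_{1:t-1})$, which is a component of $R^i_t$. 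Combining this with the previous paragraph and taking the appropriate marginals or conditional expectations yields the Markov transition probability in part~1 and the conditional expected instantaneous cost in part~2, each as a function of $(R^i_t, U^i_t)$ only.

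The main obstacle I anticipate is the displayed conditional independence. It is closely patterned on Proposition~\ref{prop:independent}, but the shift from $\VEC U_{1:t}$ to $\VEC U_{1:t-1}$ in the conditioning means it does not follow from Proposition~\ref{prop:independent} by a trivial manipulation. I expect to obtain it either by rerunning the argument used to prove Proposition~\ref{prop:independent} with the conditioning set ending at $\VEC U_{1:t-1}$, or by a short induction on $t$ that propagates the product form from time $t-1$ to time $t$ using the dynamics $X^i_t = f^i_{t-1}(Z_{t-1}, X^i_{t-1}, \VEC U_{t-1}, W^i_{t-1})$ and the mutual independence of the noises $\{W^0_{t-1}, W^1_{t-1}, \dots, W^n_{t-1}\}$ to preserve the product form across the one-step evolution. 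Once this piece is in place, the remainder of the argument is routine bookkeeping of deterministic functional dependencies and the independence of the fresh noises.
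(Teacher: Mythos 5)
Your proposal is correct and follows essentially the same route as the paper's proof: factor the one-step transition probability (and, for part~2, the conditional per-stage cost) using the system dynamics and the deterministic control laws, then use conditional independence of the local-state histories given $(Z_{1:t}, \VEC U_{1:t-1})$ to drop $x^i_{1:t}$ from the conditioning of $\VEC U^{-i}_t$ and $\VEC X^{-i}_t$, leaving a function of $(r^i_t, u^i_t)$ alone. The obstacle you flag is real but minor, and you resolve it exactly as needed: the paper cites Proposition~\ref{prop:independent} for this step even though that proposition conditions on $\VEC U_{1:t}$ rather than $\VEC U_{1:t-1}$; as you anticipate, the factorization argument in its proof goes through verbatim with the conditioning truncated at $\VEC U_{1:t-1}$ (the terms $\alpha^i_t$ sum out), so rerunning that argument supplies the needed variant.
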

See Appendix~\ref{app:CMP} for proof.

In light of Lemma~\ref{lemma:CMP}, lets reconsider the subproblem of finding the
optimal control strategy for control station~$i$ when the control strategy $\VEC
g^{-i}$ of all other control stations is fixed arbitrarily. In this subproblem,
control station~$i$ has access to $R^i_{1:t}$, chooses $U^i_t$, and incurs an
expected instantaneous cost $\EXP[ c_t(\VEC X_t, \VEC U_t) \,|\, R^i_{1:t},
U^i_{1:t}]$. Lemma~\ref{lemma:CMP} implies that the optimal choice of control
strategy $\VEC g^i$ is a Markov decision process. Thus, using Markov decision
theory~\cite{Whittle:1983}, we get the following (recall that $R^i_t = (X^i_t,
Z_{1:t}, \VEC U_{1:t-1})$ and the choice of $\VEC g^{-i}$ is arbitrary):
\begin{lemma} \label{lemma:structure}
  Consider the full observation model for any arbitrary but fixed choice of control strategy
  $\VEC g^{-i}$ of all control stations except $i$. Then, restricting attention to
  control laws of the form
  \begin{equation} \label{eq:structure-local}
    U^i_t = g^i_t(X^i_t, Z_{1:t}, \VEC U_{1:t-1})
  \end{equation}
  at control station~$i$ is without loss of optimality.
\end{lemma}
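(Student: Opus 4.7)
The plan is to reduce the subproblem of minimizing $J(\POLICY g)$ over $\POLICY g^i$, with $\VEC g^{-i}$ held fixed, to a classical centralized Markov decision process (MDP) whose state is $R^i_t = (X^i_t, Z_{1:t}, \VEC U_{1:t-1})$ and whose action is $U^i_t$, and then invoke the standard finite-horizon MDP structural result that an optimal strategy can be chosen to depend only on the current state.

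First, I would observe that once $\VEC g^{-i}$ is fixed, station~$i$ faces a single-agent sequential stochastic optimization problem: its information at time~$t$ is $R^i_{1:t} = (X^i_{1:t}, Z_{1:t}, \VEC U_{1:t-1})$, i.e., the full information available in~\eqref{eq:M1}, its action is $U^i_t$, and the objective, after applying the tower property, can be written as
\begin{equation*}
  \EXP\Big[ \sum_{t=1}^T \EXP\big[ c_t(Z_t, \VEC X_t, \VEC U_t) \,\big|\, R^i_{1:t}, U^i_{1:t} \big]\Big].
\end{equation*}
Consequently, the only primitives of the single-agent problem are the conditional law of $R^i_{t+1}$ given $(R^i_{1:t}, U^i_{1:t})$ and the conditional expected per-step cost given $(R^i_{1:t}, U^i_{1:t})$.

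Second, I would invoke Lemma~\ref{lemma:CMP} directly: part~(1) asserts that $\{R^i_t\}$ is a controlled Markov process with action $U^i_t$, and part~(2) asserts that the conditional per-step cost reduces to a function of the current pair $(R^i_t, U^i_t)$ alone. Together these are precisely the transition-law and one-step-cost axioms of an MDP with state $R^i_t$ and action $U^i_t$. Standard finite-horizon MDP theory~\cite{Whittle:1983} then yields the existence of an optimal strategy of the form $U^i_t = g^i_t(R^i_t) = g^i_t(X^i_t, Z_{1:t}, \VEC U_{1:t-1})$, which is exactly the structure~\eqref{eq:structure-local} asserted by the lemma.

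The main conceptual obstacle is already absorbed into Lemma~\ref{lemma:CMP} (and, through it, Proposition~\ref{prop:independent}): one must rule out the a~priori possibility that the past local observations $X^i_{1:t-1}$ carry useful information about the other stations' states, which station~$i$ could potentially exploit to better predict $\VEC U^{-i}_t$ and hence the cost. Once the conditional independence of the local state processes given $(Z_{1:t}, \VEC U_{1:t})$ has been established and propagated into the Markov and cost simplifications of Lemma~\ref{lemma:CMP}, the present statement becomes a one-line corollary of classical MDP theory, requiring no further technical work beyond noting that the optimization over $\POLICY g^i$ is genuinely a single-agent problem once $\VEC g^{-i}$ is frozen.
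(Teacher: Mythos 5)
Your proposal is correct and takes essentially the same route as the paper: after freezing $\VEC g^{-i}$, the paper likewise treats station~$i$'s subproblem as a centralized Markov decision process with state $R^i_t = (X^i_t, Z_{1:t}, \VEC U_{1:t-1})$ and action $U^i_t$, uses the two parts of Lemma~\ref{lemma:CMP} as exactly the transition-law and one-step-cost conditions, and then invokes Markov decision theory~\cite{Whittle:1983} to conclude that control laws of the form~\eqref{eq:structure-local} are without loss of optimality. As in your write-up, the real work in the paper is deferred to Lemma~\ref{lemma:CMP} and, behind it, Proposition~\ref{prop:independent}.
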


\begin{proof}[Proof of Proposition~\ref{prop:structure}]
  Lemma~\ref{lemma:structure} implies that for an arbitrary choice of $\VEC
  g^{-i}$, control strategies of the form~\eqref{eq:structure-local} at control
  station~$i$ dominate those of the form~\eqref{eq:M1}. Cyclically using the
  same argument for all control stations proves the result.
\end{proof}

Even after shedding $X^i_{1:t-1}$, the data at each control station is still
increasing with time. In the next step, we show how to ``compress'' this data
into a sufficient statistic. 

\subsection* {Step 2: Sufficient statistic for common data}

Consider Problem~\ref{prob:finite} for the full observation model and restrict  control
strategies of the form~\eqref{eq:controller-irrelevant}.
Proposition~\ref{prop:structure} shows that this restriction is without loss of
optimality. We use the results of~\cite{NayyarMahajanTeneketzis:2012} for this
restricted setup.

Split the data at each control station into two parts: the common data $C_t =
(Z_{1:t}, \VEC U_{1:t-1}$) that is observed by all control stations and the
local (or private) data $L^i_t = X^i_t$ that is observed by only control
station~$i$. Note that the common information $C_t \subset C_{t+1}$ is
increasing with time, while the local information $L^i_t$ has a fixed size.
Thus, the system has \emph{partial history sharing} information structure with
finite local memory. Nayyar \emph{et al.}~\cite{NayyarMahajanTeneketzis:2012} derived
structural properties of optimal controllers and a dynamic programming
decomposition for such an information structure.

To present the result, we first define the following:
\begin{definition} \label{def:pi}
  Given any control strategy $\bar {\VEC g}$ of the
  form~\eqref{eq:controller-irrelevant}, 
  let $\Pi_t$, $t=1,\dots,T$, denote the
  posterior probability of $(Z_t, \VEC X_t)$ given the common information $C_t$;
   i.e., for any $z \in \ALPHABET Z$ and $x^i \in \ALPHABET X^i$, the component
   $(z, \VEC x)$ of $\Pi_t$ is given by 
   \[ \Pi_t(z, \VEC x) \DEFINED \PR^{\bar {\VEC g}}(Z_t = z, \VEC X_t = \VEC x \mid C_t). \]
\end{definition}
The update of $\Pi_{t}$ follows the standard non-linear filtering equation.
It is shown in~\cite{NayyarMahajanTeneketzis:2012} that $\Pi_t$ is a sufficient
statistic for $C_t$; in particular, we have the following structural result. 

\begin{proposition} 
  [{\cite[Theorem 2]{NayyarMahajanTeneketzis:2012}} applied to model of
  Proposition~\ref{prop:structure}]
  \label{prop:coordinator-structure}
  In the full observation model, restricting attention to control laws of the form
  \begin{equation} \label{eq:controller-common}
    U^i_t = \hat g^i_t(X^i_t, \Pi_t)
  \end{equation}
  at all control stations~$i$, $i=1,\dots,n$, is without loss of optimality.
\end{proposition}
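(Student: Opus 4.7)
The plan is to reduce the statement to a direct quotation of Theorem~2 of~\cite{NayyarMahajanTeneketzis:2012}, so the main work is to check that, once Step~1 has been carried out, the model falls inside the partial history sharing framework of that paper. First I would, in light of Proposition~\ref{prop:structure}, restrict attention to control laws of the form~\eqref{eq:controller-irrelevant}, and then partition the data at each control station into the common information $C_t = (Z_{1:t}, \VEC U_{1:t-1})$ observed by every station and the private information $L^i_t = X^i_t$ observed only by station~$i$. Two hypotheses of the partial history sharing framework have to be verified: (a)~the common information is nested, $C_t \subseteq C_{t+1}$, with an increment $(Z_{t+1}, \VEC U_t)$ that becomes common knowledge at the next step, and (b)~the private information has bounded size, which holds uniformly in $t$ because $L^i_t \in \ALPHABET X^i$ is a fixed finite set. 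Both are immediate from the construction.

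Next I would invoke the common-information coordinator construction of~\cite{NayyarMahajanTeneketzis:2012}. A fictitious coordinator observes $C_t$ and, at each time, chooses prescriptions $\gamma^i_t \colon \ALPHABET X^i \to \ALPHABET U^i$; controller~$i$ then plays $U^i_t = \gamma^i_t(X^i_t)$. Every control law of the form~\eqref{eq:controller-irrelevant} can be written this way and conversely, so the decentralized problem is equivalent to a centralized POMDP for the coordinator with hidden state $(Z_t, \VEC X_t)$, observation increments $(Z_{t+1}, \VEC U_t)$, and control inputs $(\gamma^1_t, \dots, \gamma^n_t)$. By Definition~\ref{def:pi}, the posterior $\Pi_t$ is exactly the belief state of this POMDP. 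Using~\eqref{eq:shared}--\eqref{eq:dynamics} together with the mutual independence of the driving noises and of the initial conditions, a standard Bayesian computation produces a recursion
\[
  \Pi_{t+1} = \phi(\Pi_t, \gamma^1_t, \dots, \gamma^n_t, Z_{t+1}, \VEC U_t)
\]
for some update map $\phi$ that does not depend on the strategy, which verifies that $\Pi_t$ is an information state for the coordinator.

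Once this is in hand, Theorem~2 of~\cite{NayyarMahajanTeneketzis:2012} (or, equivalently, classical POMDP theory~\cite{Whittle:1983} applied to the coordinator) yields optimal prescriptions of the form $\gamma^i_t = \hat g^i_t(\cdot, \Pi_t)$; substituting back at each control station gives $U^i_t = \hat g^i_t(X^i_t, \Pi_t)$, which is~\eqref{eq:controller-common}. The only nontrivial obstacle is verifying that $\Pi_t$ admits the closed update $\phi$ in the line displayed above, since the coordinator must update its belief knowing only $C_t$ and the prescriptions, not the private $X^i_t$'s. This is precisely where Proposition~\ref{prop:independent} does the work: conditional on $C_t$, the local states factorize across subsystems, so each factor can be propagated separately using its own prescription $\gamma^i_t$ and the observed $(Z_{t+1}, \VEC U_t)$, and the joint update is a product of these single-subsystem filtering steps. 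With the recursion for $\Pi_t$ established, the remainder of the argument is a routine appeal to the cited theorem.
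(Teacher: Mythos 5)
Your proposal is correct and follows essentially the same route as the paper: after Proposition~\ref{prop:structure}, split the data into common information $C_t = (Z_{1:t}, \VEC U_{1:t-1})$ and fixed-size local information $L^i_t = X^i_t$, observe that this is a partial history sharing information structure with bounded local memory, and apply Theorem~2 of~\cite{NayyarMahajanTeneketzis:2012}. The paper does exactly this and supplies no further proof, so the coordinator construction and belief-update verification you spell out are an (accurate) unpacking of the cited theorem rather than a departure from the paper.

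One claim in your write-up is misattributed, although it does not break the argument. You assert that the ``only nontrivial obstacle'' is showing that $\Pi_t$ admits a strategy-independent update, and that this is ``precisely where Proposition~\ref{prop:independent} does the work.'' In fact, the coordinator's belief over the joint state $(Z_t, \VEC X_t)$ admits a standard nonlinear-filter update with no independence assumption at all: given $\Pi_t$, the prescriptions, and the observed increment $(Z_{t+1}, \VEC U_t)$, one conditions on the event $\{\hat{\VEC D}_t(\VEC X_t) = \VEC U_t\}$ and propagates through the known dynamics. This is part of what Theorem~2 of~\cite{NayyarMahajanTeneketzis:2012} already establishes for \emph{general} partial history sharing models, where the local states need not factorize; the paper accordingly disposes of it with the one-line remark that the update of $\Pi_t$ ``follows the standard non-linear filtering equation.'' Proposition~\ref{prop:independent} is needed elsewhere in the paper's logic: in Step~1 (through Lemmas~\ref{lemma:CMP} and~\ref{lemma:structure}) to justify shedding $X^i_{1:t-1}$, and in Step~3 (Lemma~\ref{lemma:equivalent}, together with Lemma~\ref{lemma:theta}) to replace $\Pi_t$ by the smaller statistic $(Z_t, \VEC \Theta_t)$. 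Your product-form update is valid---the conditional independence does hold---but it is doing Step~3's simplification inside Step~2, and presenting it as a prerequisite for invoking the cited theorem misstates what that theorem requires.
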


To obtain a dynamic programming decomposition to find optimal control strategies of the
form~\eqref{eq:controller-common}, the following 
\emph{partially evaluated control laws} were defined
in~\cite{NayyarMahajanTeneketzis:2012}: For any control strategy of
the form~\eqref{eq:controller-common}, and any realization $\pi_t$ of $\Pi_t$,
let
\[ \hat d^i_t(\cdot) = \hat g^i_t(\cdot, \pi_t) \]
denote a mapping from $\ALPHABET X^i_t$ to $\ALPHABET U^i_t$. When $\Pi_t$ is a
random variable, the above mapping is a random mapping denoted by $\hat D^i_t$.
Let $\hat {\VEC d}_t = (\hat d^1_t, \dots, \hat d^n_t)$ and $\hat {\VEC D}_t =
(\hat D^1_t, \dots, \hat D^n_t)$. Then optimal control strategies of the
form~\eqref{eq:controller-common} are obtained as follows.

\begin{proposition}
  [{\cite[Theorem 3]{NayyarMahajanTeneketzis:2012}} applied to the model of
  Proposition~\ref{prop:structure}]
 \label{prop:coordinator-DP}
  For any $\pi_t \in \PSP*{\ALPHABET Z \times \ALPHABET X^1 \times \cdots
  \ALPHABET X^n}$, define 
  \begin{align}
    V_T(\pi_T) &= \min_{\hat{\VEC d}_T}
    \EXP [ c_T(Z_T, \VEC X_T, \VEC U_T)
      \mid \Pi_T = \pi_T, \hat{\VEC D}_T = \hat{\VEC d}_T ] \\
   \intertext{and for $t=T-1,T-2,\dots,1$,}
   V_t(\pi_t) &= \min_{\hat{\VEC d}_t}
   \EXP [ c_t(Z_t, \VEC X_t, \VEC U_t) + V_{t+1}(\Pi_{t+1})
      \mid \Pi_t = \pi_t, \hat{\VEC D}_t = \hat{\VEC d}_t ]
  \end{align}
  Let $\hat \Psi_t(\pi_t)$ denote the $\arg \min$ of the right hand side of
  $V_t(\pi_t)$, and $\hat \Psi^i_t$ denote the $i$-th component of $\hat \Psi_t$. 
  Then, a control strategy
  \[ \hat g^i_t(x^i_t, \pi_t) \in \hat \Psi^i_t(\pi_t)(x^i_t) \]
  is optimal for Problem~\ref{prob:finite} with the full observation model.
\end{proposition}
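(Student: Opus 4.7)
The plan is to reformulate the decentralized problem---restricted by Proposition~\ref{prop:coordinator-structure} to control laws of the form~\eqref{eq:controller-common}---as a centralized stochastic control problem faced by a fictitious coordinator, and then invoke standard MDP theory. At each time~$t$ the coordinator observes the common information $C_t = (Z_{1:t}, \VEC U_{1:t-1})$ and selects the prescription $\hat{\VEC d}_t = (\hat d^1_t, \ldots, \hat d^n_t)$; control station~$i$ then mechanically sets $U^i_t = \hat d^i_t(X^i_t)$. Because a coordinator strategy mapping $C_t$ to $\hat{\VEC d}_t$ is in one-to-one correspondence with a system strategy of the form~\eqref{eq:controller-common}, the two formulations achieve the same expected cost, and it suffices to solve the coordinator's problem.

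The core step is to verify that $\Pi_t$ is a controlled Markov state for the coordinator, with action $\hat{\VEC D}_t$. Writing $\hat{\VEC d}_t(\VEC x) \DEFINED (\hat d^1_t(x^1), \ldots, \hat d^n_t(x^n))$, a standard nonlinear filtering calculation---using the dynamics~\eqref{eq:shared}--\eqref{eq:dynamics}, the independence of the noise processes across $i$ and $t$, and the fact that $\VEC U_t = \hat{\VEC d}_t(\VEC X_t)$ is a deterministic function of $(\VEC X_t, \hat{\VEC d}_t)$---yields a deterministic update $\Pi_{t+1} = F_t(\Pi_t, \hat{\VEC d}_t, Z_{t+1}, \VEC U_t)$ and shows that the conditional distribution of the new common observation $(Z_{t+1}, \VEC U_t)$ given $(C_{1:t}, \hat{\VEC D}_{1:t})$ depends on the past only through $(\Pi_t, \hat{\VEC D}_t)$. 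Consequently, under any coordinator strategy, $\{\Pi_t\}$ is a controlled Markov process with control $\hat{\VEC D}_t$. Since $\Pi_t$ is by definition the conditional law of $(Z_t, \VEC X_t)$ given $C_t$, the expected per-step cost satisfies
\begin{equation*}
  \EXP\bigl[\, c_t(Z_t, \VEC X_t, \VEC U_t) \bigm| C_t, \hat{\VEC D}_t = \hat{\VEC d}_t \,\bigr]
  = \sum_{z, \VEC x} \pi_t(z, \VEC x)\, c_t\bigl(z, \VEC x, \hat{\VEC d}_t(\VEC x)\bigr),
\end{equation*}
which is a deterministic function of $(\Pi_t, \hat{\VEC d}_t)$.

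With these two properties in hand, the coordinator's problem reduces to a finite-horizon MDP whose state is $\Pi_t$, whose action is $\hat{\VEC d}_t$, and whose per-step cost is the deterministic function above. Standard backward induction (see, e.g., \cite{Whittle:1983}) then produces the value functions $V_t$ and the minimizing prescriptions $\hat \Psi_t(\pi_t)$ in the statement; reverting to the original system via $U^i_t = \hat \Psi^i_t(\Pi_t)(X^i_t)$ gives an optimal strategy for Problem~\ref{prob:finite} by the equivalence noted in the first paragraph. The main obstacle is the Markov-process verification, in particular establishing that $\PR(\Pi_{t+1} \in \cdot \mid C_{1:t}, \hat{\VEC D}_{1:t}) = \PR(\Pi_{t+1} \in \cdot \mid \Pi_t, \hat{\VEC D}_t)$; this relies on keeping the conditional distribution of $(Z_t, \VEC X_t)$ given $C_t$ tractable (here Proposition~\ref{prop:independent} is useful, since it factorizes $\pi_t$ across subsystems) and on carefully tracking how the next common observation $(Z_{t+1}, \VEC U_t)$ is generated from $(\Pi_t, \hat{\VEC D}_t)$ and the noise variables.
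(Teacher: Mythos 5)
Your proposal is correct and follows exactly the route the paper relies on: the paper gives no standalone proof of Proposition~\ref{prop:coordinator-DP}, but justifies it by citing Theorem~3 of \cite{NayyarMahajanTeneketzis:2012}, whose proof is precisely the coordinator reformulation plus the controlled-Markov verification and backward induction that you reconstruct. One small remark: Proposition~\ref{prop:independent} is not actually needed at this step---the filtering update of $\Pi_t$ and its controlled Markov property follow from Bayes rule alone---and the factorization it provides is only exploited later, in Step~3 (Lemma~\ref{lemma:equivalent}), to reduce $\Pi_t$ to $(Z_t, \VEC \Theta_t)$.
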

  
\subsection* {Step 3: Simplification of the sufficient statistic}

In this step, we use Proposition~\ref{prop:independent} to simplify the sufficient
statistic $\Pi_t$ used in Step~2, and thereby simplify
Propositions~\ref{prop:coordinator-structure} and~\ref{prop:coordinator-DP}. For
that matter, we define the following.
\begin{definition} \label{def:theta}
  Given any control strategy $\hat {\VEC g}$ of the form~\eqref{eq:controller-common}, 
  let $\Theta^i_t$, $t=1,\dots,T$, denote the
  posterior probability of $X^i_t$ given the common information $C_t$,
   i.e., for any $x^i \in \ALPHABET X^i$, the component
   $x^i$ of $\Theta^i_t$ is given by 
   \[ \Theta^i_t(x^i) \DEFINED \PR^{\hat {\VEC g}}( X^i_t = x^i \mid C_t). \]
\end{definition}
The update of $\Theta^i_t$ follows the standard non-linear filtering equation.
For completeness, we describe this update below.
\begin{lemma} \label{lemma:theta}
  There exists a deterministic function $F_t$ such that
  \begin{equation} \label{eq:theta-update}
    \VEC \Theta_{t+1} = F_t(\VEC \Theta_t, Z_{t+1}, \VEC U_t, \hat {\VEC D}_t)
  \end{equation}
\end{lemma}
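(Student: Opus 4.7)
The plan is to derive the recursion for $\VEC \Theta_{t+1}$ by applying Bayes' rule coordinate-by-coordinate, using the conditional independence of local states given the common information. Once that independence is in place, each $\Theta^i_{t+1}$ comes from a standard one-step nonlinear filter that only needs $\Theta^i_t$, the partial evaluation $\hat d^i_t$, and the newly observed portion of the common information.

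First I would establish that, given $C_t = (Z_{1:t}, \VEC U_{1:t-1})$, the local states $X^1_t, \dots, X^n_t$ are mutually independent. This is slightly stronger than what Proposition~\ref{prop:independent} states (which conditions on $\VEC U_{1:t}$), but a short induction on $t$ fills the gap: under a strategy of the form~\eqref{eq:controller-common}, each $U^i_\tau$ is a deterministic function of $X^i_\tau$ once $\Pi_\tau$ (equivalently $C_\tau$) is fixed, and combined with the independence of the $W^i_t$ across $i$, conditional independence is preserved one step at a time. Consequently, $\PR(\VEC X_t = \VEC x \mid C_t) = \prod_{i=1}^n \Theta^i_t(x^i)$, which is the bridge between the coordinator's $\Pi_t$ and the per-subsystem beliefs $\VEC \Theta_t$ introduced in Step~3.

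Next I would apply Bayes' rule to $\PR(X^i_{t+1} = \tilde x^i \mid C_t, Z_{t+1}, \VEC U_t)$, using the factorization above together with the dynamics~\eqref{eq:dynamics}. After expanding the joint, the $P_{W^0}$ factor that comes from $Z_{t+1} = f^0_t(Z_t, \VEC U_t, W^0_t)$ and the products over coordinates $j \neq i$ both cancel between numerator and denominator, leaving
\[
\Theta^i_{t+1}(\tilde x^i)
\;=\;
\frac{\displaystyle\sum_{x^i \in \ALPHABET X^i}
\IND[\hat d^i_t(x^i) = u^i_t]\;
P_{W^i}\!\bigl(\{w : f^i_t(Z_t, x^i, \VEC u_t, w) = \tilde x^i\}\bigr)\;
\Theta^i_t(x^i)}
{\displaystyle\sum_{x^i \in \ALPHABET X^i} \IND[\hat d^i_t(x^i) = u^i_t]\;\Theta^i_t(x^i)}.
\]
The right-hand side is manifestly a deterministic function of $(\Theta^i_t, \hat d^i_t, \VEC u_t, Z_t)$; assembling the $n$ coordinates yields the function $F_t$ claimed by the lemma.

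The main obstacle is bookkeeping around $Z_t$: the transition kernel $f^i_t$ depends on $Z_t$, which lies in $C_t$ but is not explicitly listed among the arguments of $F_t$. I would resolve this by treating $Z_t$ as implicit in the time-$t$ information state, consistent with the remark in Step~3 that $\Pi_t$ is equivalent to $(Z_t, \VEC \Theta_t)$, so that the pair $(Z_t, \VEC \Theta_t)$ is carried forward together and $Z_{t+1}$ in the argument list of $F_t$ serves as the updated shared state to be paired with $\VEC \Theta_{t+1}$. A secondary check is that the denominator in the display is strictly positive along sample paths of positive probability, which holds because $u^i_t$ is a realized control of station~$i$ and hence $\hat d^i_t(x^i) = u^i_t$ for at least one $x^i$ carrying positive $\Theta^i_t$-mass. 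Aside from this, the proof is a direct Bayes-rule calculation and presents no further difficulty.
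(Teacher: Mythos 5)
Your proof is correct, and at its core it is the same argument as the paper's: condition on the common information, use the conditional independence of the local states (Proposition~\ref{prop:independent}) to write $\PR(\VEC X_t = \VEC x_t \mid C_t) = \prod_{j=1}^n \Theta^j_t(x^j_t)$, and apply Bayes' rule to the one-step-ahead joint. The difference is how far you carry the factorization. The paper's update function $F^i_t(\VEC \theta_t, z_{t+1}, \VEC u_t, \VEC d_t)$ is left as a coupled sum over all $\VEC x_t$ satisfying $\hat{\VEC d}_t(\VEC x_t) = \VEC u_t$, so each component is written as a function of the whole vector $(\VEC \Theta_t, \hat{\VEC D}_t)$; you instead push the cancellation through---the $P_{W^0}$ factor and every coordinate $j \neq i$ drop out of the Bayes ratio---and obtain a fully decoupled per-subsystem filter in which $\Theta^i_{t+1}$ depends only on $(\Theta^i_t, \hat d^i_t, \VEC U_t, Z_t)$. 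This sharper form is exactly what the paper later uses, implicitly, in the multiaccess-broadcast update~\eqref{eq:simplified-F}, so the extra step buys something concrete. You also treat two subtleties more carefully than the paper does. First, Proposition~\ref{prop:independent} conditions on $(Z_{1:t}, \VEC U_{1:t})$, while the factorization needed here conditions on $C_t = (Z_{1:t}, \VEC U_{1:t-1})$; your induction closes this gap (equivalently, one can sum the factorized joint in Appendix~\ref{app:independent} over $\VEC u_t$), whereas the paper invokes the proposition without comment. Second, the update genuinely depends on $Z_t$, which appears neither in the argument list of~\eqref{eq:theta-update} nor in the paper's own $F^i_t$, even though the paper's intermediate function $B$ does depend on $z_t$; your resolution---propagate $(Z_t, \VEC \Theta_t)$ together as the information state, consistent with Lemma~\ref{lemma:equivalent} and Theorems~\ref{thm:structure} and~\ref{thm:DP}---is the right one.
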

The proof follows from the law of total probability and Bayes rule. See
Appendix~\ref{app:theta-update}.

We can now simplify the sufficient statistic $\Theta_t$ as follows:
\begin{lemma} \label{lemma:equivalent}
  For any $z \in \ALPHABET Z$, $x^i \in \ALPHABET X^i$, $i=1,\dots, n$, the
  values $(z,
  \VEC \Theta_t(\VEC x))$ are sufficient to compute $\Pi_t(z, \VEC x)$. 
\end{lemma}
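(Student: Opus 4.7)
The plan is to prove the explicit product formula
\[
\Pi_t(z, \VEC x) \;=\; \IND[Z_t = z]\,\prod_{i=1}^{n} \Theta^i_t(x^i),
\]
which exhibits $\Pi_t(z, \VEC x)$ as a deterministic function of the realized $Z_t$ and the $n$-tuple of marginal values $\VEC\Theta_t(\VEC x) = (\Theta^1_t(x^1), \dots, \Theta^n_t(x^n))$, from which the lemma follows at once.

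The derivation splits into two ingredients. First, because $Z_t$ is a component of the common information $C_t$, we have $\PR(Z_t = z \mid C_t) = \IND[Z_t = z]$, so
\[
\Pi_t(z, \VEC x) = \IND[Z_t = z]\, \PR(\VEC X_t = \VEC x \mid C_t).
\]
Second, we need the conditional independence $\PR(\VEC X_t = \VEC x \mid C_t) = \prod_{i} \Theta^i_t(x^i)$, for which I would invoke Proposition~\ref{prop:independent}. Specifically, I would apply Proposition~\ref{prop:independent} at time $t-1$ to obtain conditional independence of $X^1_{t-1}, \dots, X^n_{t-1}$ given $(Z_{1:t-1}, \VEC U_{1:t-1})$, and then propagate this one step forward using the dynamics \eqref{eq:shared}--\eqref{eq:dynamics}: since $X^i_t = f^i_{t-1}(Z_{t-1}, X^i_{t-1}, \VEC U_{t-1}, W^i_{t-1})$ with the noises $W^0_{t-1}, W^1_{t-1}, \dots, W^n_{t-1}$ mutually independent and independent of the history, the $X^i_t$ remain conditionally independent after the update, and further conditioning on $Z_t = f^0_{t-1}(Z_{t-1}, \VEC U_{t-1}, W^0_{t-1})$ does not couple them because $W^0_{t-1}$ is independent of the $W^i_{t-1}$.

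The main obstacle is bookkeeping of conditioning $\sigma$-fields in this last step: the tempting shortcut is to quote Proposition~\ref{prop:independent} at time $t$ directly, but that proposition conditions on one additional variable, $\VEC U_t$, beyond what $C_t$ contains. Propagating carefully from time $t-1$---and verifying that the extra conditioning on $Z_t$ preserves independence---avoids this pitfall. Once the product formula is established, combining it with the recursion of Lemma~\ref{lemma:theta} yields the computable simplification of $\Pi_t$ via $(Z_t,\VEC\Theta_t)$ that will be used to simplify Propositions~\ref{prop:coordinator-structure} and~\ref{prop:coordinator-DP} in the remainder of Step~3.
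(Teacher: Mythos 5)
Your proposal is correct, and it establishes exactly the identity the paper's own proof establishes, namely $\Pi_t(z,\VEC x) = \IND[Z_t=z]\cdot\prod_{i=1}^n \Theta^i_t(x^i)$, with your first step identical to the paper's step $(a)$. Where you genuinely differ is in justifying the factorization step: the paper's proof simply cites Proposition~\ref{prop:independent}, even though, as you observe, that proposition conditions on $(Z_{1:t}, \VEC U_{1:t})$ while $C_t = (Z_{1:t}, \VEC U_{1:t-1})$ omits $\VEC U_t$; since conditional independence is not in general preserved when a conditioning variable is dropped, the paper's citation is, read literally, a small gap rather than a complete argument. Your repair---apply Proposition~\ref{prop:independent} at time $t-1$, push one step through the dynamics \eqref{eq:shared}--\eqref{eq:dynamics} using the mutual independence of $W^0_{t-1}, W^1_{t-1}, \dots, W^n_{t-1}$ and their independence of the past, and note that additionally conditioning on $Z_t$ cannot couple the $X^i_t$ because $Z_t$ is conditionally independent of them given $(Z_{1:t-1}, \VEC U_{1:t-1})$---is valid; you should just also record the base case $t=1$, where the claim is precisely the model assumption that the $X^i_1$ are conditionally independent given $Z_1$. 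An alternative repair, closer in spirit to the paper, is purely algebraic: in Appendix~\ref{app:independent} the joint law factorizes as $\PR(z_{1:t}, \VEC x_{1:t}, \VEC u_{1:t}) = \bigl(\prod_{i=1}^n A^i_t B^i_t\bigr)\Gamma_t$, and summing over $\VEC u_t$ (using $\sum_{u^i_t} \alpha^i_t = 1$ for each $i$) shows that Proposition~\ref{prop:independent} holds verbatim with $\VEC U_{1:t}$ replaced by $\VEC U_{1:t-1}$, which is the form the lemma actually needs. Your route buys a transparent probabilistic explanation of why the factorization survives the change of conditioning; the algebraic route buys brevity and stays entirely within the machinery the paper has already set up.
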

\begin{proof}
  The proof follows directly from the definition of  $\Pi_t$, $\Theta^i_t$ and
  Proposition~\ref{prop:independent}. Let $C_t = (Z_{1:t}, \VEC U_{1:t-1}$ and
  consider the component $(z, \VEC x)$ of $\Pi_t$:
  \begin{align*}
    \Pi_t(z, \VEC x) 
    \stackrel{(a)}= \IND[Z_t = z] \cdot
    \PR(\VEC X_t = \VEC x  \mid Z_{1:t}, \VEC U_{1:t-1}) 
    \stackrel{(b)}= \IND[Z_t = z] \cdot
    \prod_{i=1}^n \Theta^i_t(x^i)
  \end{align*}
  where $(a)$ follows form the law of total probability and $(b)$ follows from
  Proposition~\ref{prop:independent}. 
\end{proof}

By substituting $(Z_t, \VEC \Theta_t)$ instead of $\Pi_t$ in
Propositions~\ref{prop:coordinator-structure} and~\ref{prop:coordinator-DP}, we
get the following:

\begin{theorem}[Structure of optimal controllers]
  \label{thm:structure}
  In the full observation model, restricting attention to control laws of the form
  \begin{equation} \label{eq:controller-final}
    U^i_t = \tilde g^i_t(X^i_t, Z_t, \VEC \Theta_t)
  \end{equation}
  at all control stations~$i$, $i=1,\dots,n$, is without loss of optimality.
\end{theorem}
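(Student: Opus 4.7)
The plan is to derive Theorem~\ref{thm:structure} as a direct reparametrization of Proposition~\ref{prop:coordinator-structure} using Lemma~\ref{lemma:equivalent}. The former already reduces the search to laws of the form $U^i_t = \hat g^i_t(X^i_t, \Pi_t)$, and the latter shows that $\Pi_t$ is a deterministic function of $(Z_t, \VEC \Theta_t)$, so the substitution is essentially mechanical.

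Concretely, I would first note that Lemma~\ref{lemma:equivalent} yields the factorization
\[ \Pi_t(z, \VEC x) = \IND[Z_t = z] \prod_{i=1}^n \Theta^i_t(x^i), \]
which defines a deterministic map $\phi$ with $\Pi_t = \phi(Z_t, \VEC \Theta_t)$ almost surely. Given any strategy $\hat{\VEC g}$ of the form guaranteed by Proposition~\ref{prop:coordinator-structure}, I would set
\[ \tilde g^i_t(x^i, z, \VEC \theta) \DEFINED \hat g^i_t(x^i, \phi(z, \VEC \theta)). \]
Then along every sample path of the closed-loop system the actions $\tilde g^i_t(X^i_t, Z_t, \VEC \Theta_t)$ coincide with $\hat g^i_t(X^i_t, \Pi_t)$, so the joint distribution of $(Z_{1:T}, \VEC X_{1:T}, \VEC U_{1:T})$, and therefore the expected total cost, is preserved: $J(\tilde{\VEC g}) = J(\hat{\VEC g})$.

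The only point requiring care is that $\VEC \Theta_t$ must be a legitimate statistic, i.e., computable from the common information $C_t$ once the strategy is fixed. This is guaranteed by Lemma~\ref{lemma:theta}, which provides the recursion $\VEC \Theta_{t+1} = F_t(\VEC \Theta_t, Z_{t+1}, \VEC U_t, \hat{\VEC D}_t)$; under the reparametrized strategy $\tilde{\VEC g}$, the partially evaluated laws are obtained by fixing $(Z_t, \VEC \Theta_t)$ in $\tilde g^i_t$, and the same recursion then expresses $\VEC \Theta_t$ as a deterministic function of $C_t$. Beyond this bookkeeping, the argument presents no genuine obstacle---the theorem is a one-line change of variables supported entirely by the preceding lemmas, with the slight circularity in the definition of $\VEC \Theta_t$ (as a posterior computed under the strategy being designed) resolved by the common-information recursion rather than by any new analytical content.
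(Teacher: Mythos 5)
Your proposal is correct and matches the paper's own argument: the paper obtains Theorem~\ref{thm:structure} precisely by substituting $(Z_t, \VEC \Theta_t)$ for $\Pi_t$ in Proposition~\ref{prop:coordinator-structure} via the factorization of Lemma~\ref{lemma:equivalent}, which is exactly your reparametrization through the map $\phi$. Your additional bookkeeping---that $\VEC \Theta_t$ is computable from the common information via the recursion of Lemma~\ref{lemma:theta}---is the same justification the paper relies on implicitly, so there is no substantive difference in approach.
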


For any control strategy of the form~\eqref{eq:controller-final}, and any
realization $\VEC \theta_t$ of $\VEC \Theta_t$, 
let
\[ \tilde d^i_t(\cdot) = \tilde g^i_t(\cdot, z_t, \VEC {\theta}_t) \]
denote a mapping from $\ALPHABET X^i_t$ to $\ALPHABET U^i_t$. When $\VEC \Theta_t$ is a
random variable, the above mapping is a random mapping denoted by $\tilde D^i_t$. Let
$\tilde {\VEC d}_t = (\tilde d^1_t, \dots, \tilde d^n_t)$ and $\tilde {\VEC D}_t
= (\tilde D^1_t, \dots, \tilde D^n_t)$. Then optimal control strategies of the
form~\eqref{eq:controller-common} are obtained as follows.

\begin{theorem}[Dynamic programming decomposition]
  \label{thm:DP}
  For any $z_t \in \ALPHABET Z$ and $\theta^i_t \in \PSP*{\ALPHABET X^i}$,
  $i=1,\dots,n$, define
  \begin{align}
    V_T(z_T, \VEC \theta_T) &= \min_{\tilde{\VEC d}_T}
    \EXP [ c_T(Z_T, \VEC X_T, \VEC U_T)
      \mid Z_T = z_T, \VEC \Theta_T = \VEC \theta_T, \tilde{\VEC D}_T = \tilde{\VEC d}_T ] \\
   \intertext{and for $t=T-1,T-2,\dots,1$,}
   V_t(z_t, \VEC \theta_t) &= \min_{\tilde{\VEC d}_t}
   \EXP [ c_t(Z_t, \VEC X_t, \VEC U_t) + V_{t+1}(\Pi_{t+1})
      \mid Z_t = z_t, \VEC \Theta_t = \VEC \theta_t, \tilde{\VEC D}_t = \tilde{\VEC d}_t ]
  \end{align}
  Let $\tilde \Psi_t(z_t, \VEC \theta_t)$ denote the $\arg \min$ of the right hand side of
  $V_t(z_t, \VEC \theta_t)$, and $\tilde \Psi^i_t$ denote the $i$-th component of $\tilde \Psi_t$. 
  Then, a control strategy
  \[ \tilde g^i_t(x^i_t, z_t, \VEC \theta_t) \in \tilde \Psi^i_t(z_t, \VEC \theta_t)(x^i_t) \]
  is optimal for Problem~\ref{prob:finite} with the full observation model.
\end{theorem}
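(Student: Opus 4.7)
The plan is to derive Theorem~\ref{thm:DP} as a direct corollary of Proposition~\ref{prop:coordinator-DP}, using Lemma~\ref{lemma:equivalent} to transport the dynamic program from the sufficient statistic $\Pi_t$ to the simpler statistic $(Z_t, \VEC \Theta_t)$. Lemma~\ref{lemma:equivalent} together with the fact that each $\Theta^i_t$ can be recovered from $\Pi_t$ by marginalization shows that $\Pi_t$ and $(Z_t, \VEC \Theta_t)$ carry the same information, so the two parameterizations of the DP are equivalent.

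First I would argue by backward induction on $t$ that the value function $V_t(\pi_t)$ of Proposition~\ref{prop:coordinator-DP} depends on $\pi_t$ only through the pair $(z_t, \VEC \theta_t)$ that $\pi_t$ determines; define $\tilde V_t(z_t, \VEC \theta_t)$ by this dependence. For $t = T$, the conditional expectation of $c_T(Z_T, \VEC X_T, \VEC U_T)$ given $\Pi_T = \pi_T$ and $\hat{\VEC D}_T = \hat{\VEC d}_T$ is obtained by integrating the cost against the joint measure on $(\VEC X_T, \VEC U_T)$ determined by $\pi_T$ and $\hat{\VEC d}_T$; by Lemma~\ref{lemma:equivalent} this measure factorizes as $\IND[z = z_T] \prod_i \theta^i_T(x^i)$ composed with the deterministic mappings $\hat d^i_T$, so it depends on $\pi_T$ only through $(z_T, \VEC \theta_T)$. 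For the inductive step, I would invoke Lemma~\ref{lemma:theta} to write $\VEC \Theta_{t+1} = F_t(\VEC \Theta_t, Z_{t+1}, \VEC U_t, \hat{\VEC D}_t)$ together with the dynamics~\eqref{eq:shared} for $Z_{t+1}$; the conditional distribution of $(Z_{t+1}, \VEC \Theta_{t+1})$ given $\Pi_t = \pi_t$ and $\hat{\VEC D}_t = \hat{\VEC d}_t$ is then a function of $(z_t, \VEC \theta_t, \hat{\VEC d}_t)$, and combined with the inductive hypothesis the continuation value $\EXP[V_{t+1}(\Pi_{t+1}) \mid \Pi_t = \pi_t, \hat{\VEC D}_t = \hat{\VEC d}_t]$ depends on $\pi_t$ only through $(z_t, \VEC \theta_t)$.

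Next I would observe that the partially evaluated mappings $\hat{\VEC d}_t$ and $\tilde{\VEC d}_t$ range over the same collection of objects (tuples of functions $\ALPHABET X^i \to \ALPHABET U^i$), so the minimizations in Propositions~\ref{prop:coordinator-DP} and Theorem~\ref{thm:DP} are over identical sets. Re-expressing the conditional expectations in Proposition~\ref{prop:coordinator-DP} via the induction above yields term-by-term equality with the recursion of Theorem~\ref{thm:DP}, so $V_t = \tilde V_t$ and the $\arg\min$ sets match. The prescription of the optimal strategy then follows by composing the $\arg\min$ with the observation $X^i_t$, and invoking Theorem~\ref{thm:structure} to conclude that this strategy is optimal within the original problem class.

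The main obstacle I expect is the bookkeeping in the inductive step, specifically verifying that the nonlinear filtering recursion for $\Pi_t$ is compatible with the factorized recursion $F_t$ provided by Lemma~\ref{lemma:theta}. This amounts to checking that the conditional independence asserted by Proposition~\ref{prop:independent} is preserved one step forward when conditioning on the enlarged common information $(Z_{1:t+1}, \VEC U_{1:t})$, which should follow directly from the structure of the dynamics~\eqref{eq:dynamics} and the independence of the noises $\{W^i_t\}_{i=1}^n$ given $(Z_t, \VEC U_t)$.
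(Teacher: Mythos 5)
Your proposal is correct and follows essentially the same route as the paper: the paper obtains Theorem~\ref{thm:DP} by substituting $(Z_t, \VEC\Theta_t)$ for $\Pi_t$ in Proposition~\ref{prop:coordinator-DP} via the equivalence of Lemma~\ref{lemma:equivalent}, which is precisely the corollary-by-substitution argument you describe. Your backward induction (using Lemma~\ref{lemma:theta} for the update and the identity of the decision sets $\hat{\VEC d}_t$ and $\tilde{\VEC d}_t$) simply makes explicit the details the paper leaves implicit, and the compatibility concern you flag is exactly what Proposition~\ref{prop:independent} already guarantees.
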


\section {Main result for the partial observation model} \label{sec:partial}

In this section, we derive structure of optimal control laws and a dynamic
programming decomposition for the partial observation model. As in the full
observation model, we cannot directly use the results
of~\cite{NayyarMahajanTeneketzis:2012} because the local observations
$Y^i_{1:t}$ at each control station are increasing with time. To circumvent this
difficulty we follow a three step approach, similar to the one taken for the
full observation model, and proceed as follows:
\begin{enumerate}
  \item Use a person-by-person approach to show that $\Xi^i_t(x) \DEFINED
    \PR(X^i_t = x \mid Y^i_{1:t}, Z_{1:t}, \VEC U_{1:t-1})$ is a sufficient
    statistic for the history of local observations at control station $i$ at
    time $t$. Thus, for any control strategy of control station $i$ that uses
    $(Y^i_{1:t}, Z_{1:t}, \VEC U_{1:t-1})$, we can choose a strategy that uses
    only $(\Xi^i_t, Z_{1:t}, \VEC U_{1:t-1})$ without loss of optimality.
  \item Steps 2 and 3 are similar to those of the full observation model with
    $X^i_t$ replaced by $\Xi^i_t$.
\end{enumerate}

Now, we describe each of these steps in detail.

\subsection*{Step 1: Sufficient statistic for local observations}

In this step, we find a sufficient statistic for the local observations
$Y^i_{1:t}$ at control station~$i$. For that matter, we define the following:

\begin{definition}
  \label{def:xi}
  Given any control strategy $\VEC g$ of the form~\eqref{eq:M2}, let $\Xi^i_t$,
  $i=1,\dots,n$, $t=1,\dots,T$ denote the posterior probability of the local
  state $X^i_t$ of substation~$i$ given all the information $(Y^i_{1:t},
  Z_{1:t}, \VEC U_{1:t-1})$ at control station~$i$, i.e., for any $x^i \in
  \ALPHABET X^i$, the component $x^i$ of $\Xi^i_t$ is given by
  \[ \Xi^i_t(x^i) \DEFINED \PR^{\VEC g}(X^i_t = x^i \mid Y^i_{1:t}, Z_{1:t}, \VEC
    U_{1:t-1}) \stackrel{(a)}= 
    \PR^{\VEC g}(X^i_t = x^i \mid Y^i_{1:t}, Z_{1:t-1}, \VEC U_{1:t-1})
    \]
    where $(a)$ follows from the independence of $\{W^0_t$, $t=1,\dots,T\}$ from
    $\{W^i_t$, $t=1,\dots,T\}$.
\end{definition}

The update of $\Xi^i_t$ follows a non-linear filtering equation as shown below.

\begin{lemma}\label{lemma:xi-update}
  For every $i$, $i=1,\dots,T$, there exist a deterministic functions $\tilde
  F^i_t$ such that
  \begin{equation} \label{eq:xi-update}
    \Xi^i_{t+1} = \tilde F^i_t(\Xi^i_t, Y^i_{t+1}, Z_{t}, \VEC U_t).
  \end{equation}
\end{lemma}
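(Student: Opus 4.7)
The plan is to derive the update in the standard two-step (predict then correct) fashion used for nonlinear filtering, and to verify at each step that only $(\Xi^i_t, Z_t, \VEC U_t, Y^i_{t+1})$ are required. Concretely, I would introduce an intermediate quantity, the one-step predictive belief
\[ \hat\Xi^i_{t+1|t}(x') \DEFINED \PR^{\VEC g}(X^i_{t+1} = x' \mid Y^i_{1:t}, Z_{1:t}, \VEC U_{1:t}), \]
and show first that $\hat\Xi^i_{t+1|t}$ is a deterministic function of $(\Xi^i_t, Z_t, \VEC U_t)$, and then that $\Xi^i_{t+1}$ is obtained from $\hat\Xi^i_{t+1|t}$ and $Y^i_{t+1}$ by a Bayes update that depends only on those quantities.

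For the prediction step, I would expand by the law of total probability over $X^i_t$ and use the dynamics~\eqref{eq:dynamics}. Conditioned on $(X^i_t, Z_t, \VEC U_t)$, the disturbance $W^i_t$ is independent of $(Y^i_{1:t}, Z_{1:t-1}, \VEC U_{1:t-1})$ by the independence assumptions on the noise processes, so
\[ \PR(X^i_{t+1} = x' \mid X^i_t = x, Y^i_{1:t}, Z_{1:t}, \VEC U_{1:t})
   = \PR(W^i_t \in \{w : f^i_t(Z_t, x, \VEC U_t, w) = x'\}), \]
which depends only on $(x, x', Z_t, \VEC U_t)$. Summing against $\Xi^i_t(x)$ (noting that $Z_t$ and $\VEC U_t$ are $\sigma(Y^i_{1:t}, Z_{1:t}, \VEC U_{1:t})$-measurable, so they can be inserted freely) gives the desired expression for $\hat\Xi^i_{t+1|t}$ in terms of $(\Xi^i_t, Z_t, \VEC U_t)$.

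For the correction step, I would apply Bayes' rule to write
\[ \Xi^i_{t+1}(x') \propto \PR(Y^i_{t+1} \mid X^i_{t+1} = x', Y^i_{1:t}, Z_{1:t}, \VEC U_{1:t})\,\hat\Xi^i_{t+1|t}(x'). \]
The observation equation $Y^i_{t+1} = \ell^i_{t+1}(X^i_{t+1}, \tilde W^i_{t+1})$ together with the independence of $\tilde W^i_{t+1}$ from all past variables shows that the likelihood above depends only on $x'$ and $Y^i_{t+1}$. Normalizing yields $\Xi^i_{t+1}$ as a deterministic function of $(\hat\Xi^i_{t+1|t}, Y^i_{t+1})$, and composing with the prediction step produces the claimed map $\tilde F^i_t$.

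The main subtlety, and the step I would be most careful with, is justifying the collapse of the conditioning history: namely, that the shared-state history $Z_{1:t-1}$ and the control history $\VEC U_{1:t-1}$ are truly redundant once we condition on $\Xi^i_t$, $Z_t$, $\VEC U_t$, and $Y^i_{t+1}$. This is where the independence of the noise processes $\{W^i_t\}$ and $\{\tilde W^i_t\}$ from each other, across time, and from the initial states is used crucially, together with the Markov form of the dynamics~\eqref{eq:dynamics} and observation~\eqref{eq:M2}. Once this conditional independence is established, the two-step derivation is routine and the explicit form of $\tilde F^i_t$ follows.
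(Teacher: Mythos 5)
Your overall plan (predict--correct via the law of total probability and Bayes rule) is essentially the route the paper takes, but there is a genuine gap at the heart of your prediction step. When you ``sum against $\Xi^i_t(x)$'', you are implicitly asserting
\begin{equation*}
  \PR(X^i_t = x \mid Y^i_{1:t}, Z_{1:t}, \VEC U_{1:t})
  = \PR(X^i_t = x \mid Y^i_{1:t}, Z_{1:t}, \VEC U_{1:t-1})
  = \Xi^i_t(x),
\end{equation*}
i.e., that conditioning additionally on the \emph{current} actions $\VEC U_t$ does not change station~$i$'s belief about $X^i_t$. For $U^i_t$ this is immediate, since under~\eqref{eq:M2} it is a function of $(Y^i_{1:t}, Z_{1:t}, \VEC U_{1:t-1})$. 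But for $j \neq i$, the action $U^j_t$ is a function of $Y^j_{1:t}$, which is correlated with $X^j_{1:t}$, and nothing in the ingredients you cite (independence of the noise processes, Markov form of the dynamics and observations, measurability of $Z_t$ and $\VEC U_t$) rules out $U^j_t$ carrying information about $X^i_t$. Equivalently, in the Bayes ratio the term $\PR(\VEC U^{-i}_t = \VEC u^{-i} \mid X^i_t = x, Y^i_{1:t}, Z_{1:t}, \VEC U_{1:t-1})$ must be shown to be constant in $x$ so that it cancels between numerator and denominator; otherwise the update is not a function of $(\Xi^i_t, Y^i_{t+1}, Z_t, \VEC U_t)$ alone. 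That cancellation is exactly the cross-subsystem conditional independence of Proposition~\ref{prop:partial-independent}, and it is where the decoupled form of the dynamics~\eqref{eq:dynamics} enters: for the general coupled dynamics~\eqref{eq:general-dynamics} this step---and the lemma itself---fails, which is precisely the point stressed in the paper's concluding discussion. The paper handles this term explicitly: see the step~\eqref{eq:PCMP-2} in the proof of Lemma~\ref{lemma:partial-CMP}, and the analogous invocation of Proposition~\ref{prop:independent} at~\eqref{eq:theta-1} in the proof that your lemma is modeled on.

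Relatedly, your closing paragraph mis-locates the subtlety. The ``collapse'' of the histories $Z_{1:t-1}$ and $\VEC U_{1:t-1}$ is not the delicate part---each factor in the recursion is history-free once the independence of the noises is used. The delicate part is the extra conditioning on $\VEC U^{-i}_t$, which requires Proposition~\ref{prop:partial-independent} rather than noise independence and Markovianity alone. Once that proposition is invoked at this point, your two-step derivation goes through; a minor cosmetic point is that $\Xi^i_{t+1}$ also conditions on $Z_{t+1}$, which you drop silently---that part is harmless by the independence of $\{W^0_t\}$ already noted in Definition~\ref{def:xi}.
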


The proof follows from the law of total probability and Bayes rule and is
similar to the proof of Appendix~\ref{app:theta-update}.

The main result of this section is the following:
\begin{proposition}\label{prop:partial-structure}
  In the partial observation model, restricting attention to control laws of the
  form
  \begin{equation} \label{eq:partial-controller-xi}
    U^i_t = \bar g^i_t(\Xi^i_t, Z_{1:t}, \VEC U_{1:t-1})
  \end{equation}
  at all control stations~$i$, $i=1,\dots,n$, is without loss of optimality.
\end{proposition}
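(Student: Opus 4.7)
The plan is to mirror the Step~1 argument used to prove Proposition~\ref{prop:structure}: arbitrarily fix the strategy $\VEC g^{-i}$ of all stations other than station~$i$, reduce the single-agent subproblem faced by station~$i$ to a Markov decision process with state $\tilde R^i_t \DEFINED (\Xi^i_t, Z_{1:t}, \VEC U_{1:t-1})$ and control $U^i_t$, appeal to standard MDP theory~\cite{Whittle:1983} to conclude that an optimal strategy at station~$i$ depends on its history only through $\tilde R^i_t$, and finally cycle the argument through $i=1,\dots,n$.

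The technical engine of the reduction is a partial-observation analog of Proposition~\ref{prop:independent}. I would first prove that, conditioned on $(Z_{1:t}, \VEC U_{1:t-1})$, the per-subsystem histories $\{(X^j_{1:t}, Y^j_{1:t})\}_{j=1}^n$ are mutually independent. The proof is by induction on $t$, using that the disturbance processes $\{W^j\}_{j=1}^n$ and $\{\tilde W^j\}_{j=1}^n$ are mutually independent and independent of $W^0$ and of the initial shared state $Z_1$, the fact that the initial local states $\VEC X_1$ are conditionally independent given $Z_1$, and the fact that the dynamics~\eqref{eq:dynamics} of subsystem~$j$ and the observation $Y^j_t = \ell^j_t(X^j_t, \tilde W^j_t)$ couple across $j$ only through $(Z_t, \VEC U_t)$, both of which are held fixed by the conditioning.

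With this lemma in hand, both the controlled Markov property and the cost simplification follow routinely. For the Markov step, the update equation~\eqref{eq:xi-update} expresses $\Xi^i_{t+1}$ as a deterministic function of $(\Xi^i_t, Y^i_{t+1}, Z_t, \VEC U_t)$, and the conditional independence implies that, given $(\tilde R^i_t, U^i_t)$, the fresh variables $(Z_{t+1}, \VEC U^{-i}_t, Y^i_{t+1})$ are generated from $(\Xi^i_t, Z_t, \VEC U_{1:t-1}, U^i_t)$ together with disturbances independent of the past at station~$i$; the other stations' actions $\VEC U^{-i}_t$ are produced by $\VEC g^{-i}$ from $\VEC Y^{-i}_{1:t}$, whose conditional law depends only on the common history. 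The cost $c_t(Z_t, \VEC X_t, \VEC U_t)$ likewise has a conditional expectation given $(Y^i_{1:t}, Z_{1:t}, \VEC U_{1:t-1}, U^i_t)$ that depends on $Y^i_{1:t}$ only through $\Xi^i_t$, since $X^i_t$ has conditional law $\Xi^i_t$ and $(\VEC X^{-i}_t, \VEC U^{-i}_t)$ are conditionally independent of $Y^i_{1:t}$ given the common information.

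Applying the standard theory of Markov decision processes~\cite{Whittle:1983} to this controlled Markov formulation yields an optimal $U^i_t$ that is a function of $\tilde R^i_t$ alone, i.e., of the form~\eqref{eq:partial-controller-xi}. Iterating this person-by-person reduction through all stations, as in the proof of Proposition~\ref{prop:structure}, gives the full result. The step I expect to be most delicate is the conditional-independence lemma: although $(Z_{1:t}, \VEC U_{1:t-1})$ encodes signalling from other stations through their past control actions, the induction must condition on realisations of $\VEC U_{1:t-1}$ (rather than marginalise over them) so that the cross-station coupling collapses and the required product form over $j$ is recovered.
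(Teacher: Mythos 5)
Your proposal is correct and follows essentially the same route as the paper: the paper likewise proves a partial-observation analog of the conditional-independence result (Propositions~\ref{prop:partial-independent} and~\ref{prop:xi-independent}), uses it together with the update equation~\eqref{eq:xi-update} to establish the controlled-Markov and cost-simplification properties for $R^i_t = (\Xi^i_t, Z_{1:t}, \VEC U_{1:t-1})$ (Lemma~\ref{lemma:partial-CMP}), and then applies Markov decision theory cyclically over the stations. The only cosmetic differences are that you prove the independence lemma by induction on $t$ (conditioning on realized $\VEC U_{1:t-1}$, which is indeed the crux, since each $U^j_s$ is a function of block~$j$'s own history plus common data) whereas the paper obtains it by a direct factorization of the joint law via Bayes rule, and that you state it for the joint histories $(X^j_{1:t}, Y^j_{1:t})$ rather than for $X^j_{1:t}$ alone---a strengthening the paper's factorization also yields implicitly.
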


The intuition behind this result is as follows. Arbitrarily fix the control
strategies $\VEC g_{-i}$ for all control stations other than~$i$. In the full
observation model, $(X^i_t, Z_{1:t}, \VEC U_{1:t-1})$ is a state sufficient for
performance evaluation at control station~$i$ (Lemma~\ref{lemma:structure}). In
the partial observation model, component $X^i_t$ of this state is not observed.
So, the posterior distribution $\Xi^i_t$ on $X^i_t$ given all the data available
at control station~$i$ should be a sufficient statistic for
$X^i_t$~\cite{Striebel:1965}.

To show that the above intuition is true, we need to establish two conditional
independence properties. 

\begin{proposition} \label{prop:partial-independent}
  Proposition~\ref{prop:independent} is also true for 
  the partial observation model for an arbitrary but fixed choice of
  control strategy $\VEC g$ of the form~\eqref{eq:M2}. 
\end{proposition}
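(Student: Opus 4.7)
The plan is to mirror the proof of Proposition~\ref{prop:independent} for the full observation model, but with an extra integration step that marginalizes out the intermediate local observations $\VEC Y_{1:t}$. The structural facts I will rely on are exactly those built into the model: (i)~given $Z_1$, the initial local states $X^i_1$ are independent across~$i$; (ii)~the dynamics noises $W^i_t$ and the observation noises $\tilde W^i_t$ are mutually independent across $i$ and~$t$, and independent of each other and of the initial state; (iii)~the local dynamics $f^i_t$, the observation channel $\ell^i_t$, and the control law $g^i_t$ involve only the subsystem-$i$ variables $X^i, Y^i, \tilde W^i$ together with the globally shared $Z$ and $\VEC U$.

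First, I would write the joint probability of the augmented trajectory by iterated chain-rule conditioning in time, using (i)--(iii) to collapse each conditional to its minimal dependence. This yields
\begin{equation*}
\PR(z_{1:t}, \VEC x_{1:t}, \VEC y_{1:t}, \VEC u_{1:t})
= \mu(z_{1:t}, \VEC u_{1:t})
\prod_{i=1}^n h_i\bigl(x^i_{1:t}, y^i_{1:t}; z_{1:t}, \VEC u_{1:t}\bigr),
\end{equation*}
where $\mu$ collects $P_Z(z_1)$ together with the shared-state transition kernels $\PR(z_{s+1}\mid z_s, \VEC u_s)$, and each $h_i$ collects $P_{X^i|Z}(x^i_1\mid z_1)$, the local transition $\PR(x^i_{s+1}\mid z_s, x^i_s, \VEC u_s)$, the observation kernel $\PR(y^i_s\mid x^i_s)$, and the indicators $\IND[u^i_s = g^i_s(z_{1:s}, y^i_{1:s}, \VEC u_{1:s-1})]$ for $s\le t$. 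Once $(z_{1:t}, \VEC u_{1:t})$ is fixed, $h_i$ is a function of the subsystem-$i$ trajectory only.

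Next, I would marginalize over $\VEC y_{1:t}$. Because the product form in the previous display is multiplicative across~$i$ in the local variables, the sum splits and gives $\prod_i \tilde h_i(x^i_{1:t}; z_{1:t}, \VEC u_{1:t})$, where $\tilde h_i \DEFINED \sum_{y^i_{1:t}} h_i$. Normalizing by $\PR(z_{1:t}, \VEC u_{1:t})$ (which also factors multiplicatively across~$i$ as $\mu(z_{1:t}, \VEC u_{1:t}) \prod_i \sum_{x^i_{1:t}} \tilde h_i$), the conditional distribution of $\VEC X_{1:t}$ given $(Z_{1:t}, \VEC U_{1:t})$ becomes a product across~$i$, and recognizing each one-dimensional marginal as $\PR(X^i_{1:t} = x^i_{1:t} \mid z_{1:t}, \VEC u_{1:t})$ yields exactly the claim.

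The hard part will be handling the indicators $\IND[u^i_s = g^i_s(z_{1:s}, y^i_{1:s}, \VEC u_{1:s-1})]$ cleanly: they feature the entire vector $\VEC u_{1:s-1}$, so they superficially appear to couple different subsystems. The resolution is that we are conditioning on $\VEC U_{1:t}$, so $\VEC u_{1:s-1}$ enters only as a fixed parameter, and the indicator thereby reduces to a constraint on the local pair $(u^i_s, y^i_{1:s})$ alone; this is precisely what permits the sum over $\VEC y_{1:t}$ to split into a product over~$i$. A secondary check relative to the full-observation proof is that the observation noises $\tilde W^i_t$, entering through $Y^i_t = \ell^i_t(X^i_t, \tilde W^i_t)$, do not introduce cross-subsystem dependence; this is immediate from their mutual independence assumed in the model.
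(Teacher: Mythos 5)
Your proposal is correct and takes essentially the same route as the paper's proof: the paper likewise factors the joint law of $(Z_{1:t}, \VEC X_{1:t}, \VEC Y_{1:t}, \VEC U_{1:t})$ into a shared-state factor $\Gamma_t$ times per-subsystem factors $A^i_t B^i_t \Delta^i_t$ (control-law terms, local transitions, and observation kernels---exactly your $h_i$), then sums out $\VEC y_{1:t}$ and then $\VEC x_{1:t}$ using the fact that each factor depends only on the subsystem-$i$ local variables once $(z_{1:t}, \VEC u_{1:t})$ is fixed, and concludes by Bayes rule. The only cosmetic difference is that the paper identifies each normalized factor with the marginal $\PR(X^j_{1:t} = x^j_{1:t} \mid z_{1:t}, \VEC u_{1:t})$ by explicitly summing the product form over $x^i_{1:t}$, $i \neq j$, a step you assert without writing out.
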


\begin{proposition}\label{prop:xi-independent}
  In the partial observation model, the posterior probability $\Xi^i_t$ of the
  local states of all subsystems are conditionally independent given the history
  of shared state and control actions.
  Specifically, for any Borel subsets $E^i_t$ of $\PSP {X^i}$, $\VEC E_t =
  (E^1_t, \dots, E^n_t)$, 
  $u^i_t \in \ALPHABET U^i$, $z_t \in \ALPHABET Z$, $i=1,\dots,n$ and
  $t=1,\dots,T$, we have
  \begin{equation}
    \PR(\VEC \Xi_{1:t} \in \VEC E_{1:t} \,|\, Z_{1:t} = z_{1:t}, \VEC U_{1:t} = \VEC u_{1:t}) 
    = \prod_{i=1}^n 
    \PR(\Xi^i_{1:t} \in E^i_{1:t} \,|\, Z_{1:t} = z_{1:t}, \VEC U_{1:t} = \VEC u_{1:t})
  \end{equation}
\end{proposition}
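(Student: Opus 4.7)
The plan is to reduce the claim about $\Xi^i_{1:t}$ to the analogous claim about $Y^i_{1:t}$, and then use Proposition~\ref{prop:partial-independent} (conditional independence of $X^i_{1:t}$) together with the primitive independence of the observation noises $\tilde W^i$ across subsystems. Concretely, iterating the filter recursion of Lemma~\ref{lemma:xi-update} (and noting that $\Xi^i_1$ is a function of $Y^i_1$ and $Z_1$) shows that $\Xi^i_{1:t}$ is a deterministic function of $(Y^i_{1:t}, Z_{1:t-1}, \VEC U_{1:t-1})$ alone. Consequently, once we condition on the common information $(Z_{1:t}, \VEC U_{1:t})$, the random object $\Xi^i_{1:t}$ becomes a measurable function of $Y^i_{1:t}$, and it suffices to prove conditional independence of $Y^1_{1:t}, \dots, Y^n_{1:t}$ given $(Z_{1:t}, \VEC U_{1:t})$.

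The key step is this auxiliary conditional independence for the observations. First I would condition additionally on $\VEC X_{1:t}$. Because $Y^i_s = \ell^i_s(X^i_s, \tilde W^i_s)$ and the noises $\{\tilde W^i_s\}$ are independent across $i$ (and independent of everything else), the conditional joint law factorizes:
\begin{equation*}
\PR(\VEC Y_{1:t} \in \VEC E \mid \VEC X_{1:t} = \VEC x_{1:t}, Z_{1:t} = z_{1:t}, \VEC U_{1:t} = \VEC u_{1:t})
= \prod_{i=1}^n \PR(Y^i_{1:t} \in E^i \mid X^i_{1:t} = x^i_{1:t}).
\end{equation*}
Then I would marginalize over $\VEC X_{1:t}$ and invoke Proposition~\ref{prop:partial-independent} to write $\PR(\VEC X_{1:t} = \VEC x_{1:t} \mid z_{1:t}, \VEC u_{1:t}) = \prod_i \PR(X^i_{1:t} = x^i_{1:t} \mid z_{1:t}, \VEC u_{1:t})$. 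The resulting double sum factorizes into a product over $i$, each factor being $\PR(Y^i_{1:t} \in E^i \mid z_{1:t}, \VEC u_{1:t})$. Pushing the product through the measurable map $Y^i_{1:t} \mapsto \Xi^i_{1:t}$ (which depends on $i$ only through $Y^i$, with the other arguments fixed by the conditioning) then yields the claim.

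The main obstacle is essentially bookkeeping rather than a conceptual difficulty: one has to be careful that the filter update really is a function of $Y^i_{1:t}$ and the \emph{conditioned} quantities $(Z_{1:t-1}, \VEC U_{1:t-1})$, and in particular does not secretly depend on any $Y^j$ for $j \neq i$. Once Lemma~\ref{lemma:xi-update} is iterated this separation is transparent; the only subtle point to check is that the strategy $\VEC g$ used to define $\Xi^i$ does not introduce cross-subsystem dependence, which it cannot, since the filter is just the Bayes update of the subsystem-$i$ observation model and is therefore strategy-independent given $(Z_{1:t}, \VEC U_{1:t})$.
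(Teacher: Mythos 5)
Your overall route is the same as the paper's (Appendix~\ref{app:xi-independent}): since, after conditioning on $(Z_{1:t}, \VEC U_{1:t})$, each $\Xi^i_{1:t}$ is a deterministic function of $Y^i_{1:t}$ alone, it suffices to factorize the conditional law of $(\VEC X_{1:t}, \VEC Y_{1:t})$ across subsystems, using Proposition~\ref{prop:partial-independent} for the $\VEC X$-part and the observation channels for the $\VEC Y$-part, and then marginalize. That reduction, and the final push-through to $\VEC \Xi_{1:t}$, are fine.

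The gap is in your key displayed equation and its justification. As written,
\begin{equation*}
\PR(\VEC Y_{1:t} \in \VEC E \mid \VEC X_{1:t} = \VEC x_{1:t}, Z_{1:t} = z_{1:t}, \VEC U_{1:t} = \VEC u_{1:t})
= \prod_{i=1}^n \PR(Y^i_{1:t} \in E^i \mid X^i_{1:t} = x^i_{1:t})
\end{equation*}
is false in general, and independence of the noises $\tilde W^i_s$ is not a sufficient reason for it. The conditioning event involves $\VEC U_{1:t}$, and each $U^i_s$ is a function of $Y^i_{1:s}$; conditioning on the controls therefore tilts the conditional law of the observations away from the pure channel law. (Concretely, if $U^i_s = Y^i_s$, then conditioning on $U^i_s = 1$ forces $Y^i_s = 1$ no matter what the channel says.) What is true, and what your marginalization step actually needs, is the factorization in which every factor retains the conditioning on $(z_{1:t}, \VEC u_{1:t})$, i.e., $\PR(\VEC y_{1:t} \mid \VEC x_{1:t}, z_{1:t}, \VEC u_{1:t}) = \prod_{i=1}^n \PR(y^i_{1:t} \mid x^i_{1:t}, z_{1:t}, \VEC u_{1:t})$. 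This holds not merely because the noises are independent, but because each control law $g^i_s$ uses only $(Y^i_{1:s}, Z_{1:s}, \VEC U_{1:s-1})$ --- subsystem-$i$ observations plus common information; if station $i$ could see $Y^j_{1:s}$ for $j \neq i$, the factorization would fail even with independent noises. Establishing it requires the joint-law factorization $\PR(z_{1:t}, \VEC x_{1:t}, \VEC y_{1:t}, \VEC u_{1:t}) = \big( \prod_{i=1}^n A^i_t B^i_t \Delta^i_t \big) \Gamma_t$ from Appendix~\ref{app:partial-independent}, in which the control-law terms $A^i_t$ are carried along; equivalently, the stronger statement implicit in that proof that the \emph{pairs} $(X^i_{1:t}, Y^i_{1:t})$ --- not just the local states --- are conditionally independent across $i$ given $(Z_{1:t}, \VEC U_{1:t})$. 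The statement of Proposition~\ref{prop:partial-independent} plus noise independence, which is all you invoke, does not deliver this. (The paper's own Appendix~\ref{app:xi-independent} uses the shorthand $\PR(y^i_{1:t} \mid x^i_{1:t})$ as well, but its computation rests on that joint factorization.) Once the factors are corrected, your argument goes through and coincides with the paper's.
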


These results are proved in Appendices~\ref{app:partial-independent}
and~\ref{app:xi-independent}.

An immediate consequence of Proposition~\ref{prop:partial-independent} and
Lemma~\ref{lemma:xi-update} is the following (see Appendix~\ref{app:partial-CMP}
for proof).

\begin{lemma} \label{lemma:partial-CMP}
  Lemma~\ref{lemma:CMP} is also true for the partial observation model with
  $R^i_t$ defined as $(\Xi^i_t, Z_{1:t}, \VEC U_{1:t-1})$. 
\end{lemma}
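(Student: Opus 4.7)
The plan is to mimic the proof of Lemma~\ref{lemma:CMP} from the full-observation case (see Appendix~\ref{app:CMP}), substituting $\Xi^i_t$ for $X^i_t$, replacing the direct dynamics~\eqref{eq:dynamics} by the filtering recursion~\eqref{eq:xi-update}, and using Propositions~\ref{prop:partial-independent} and~\ref{prop:xi-independent} as the cross-agent independence input in place of Proposition~\ref{prop:independent}.

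For the controlled Markov property, I would write $R^i_{t+1} = (\Xi^i_{t+1}, Z_{1:t+1}, \VEC U_{1:t})$ and observe that the shared history $(Z_{1:t}, \VEC U_{1:t-1})$ together with $U^i_t$ is already encoded in $(R^i_{1:t}, U^i_{1:t})$; hence the task reduces to identifying the conditional law of the genuinely new components $\VEC U^{-i}_t$, $Z_{t+1}$, and $\Xi^i_{t+1}$. Each $U^j_t$ with $j\neq i$ is a function of $(Y^j_{1:t}, Z_{1:t}, \VEC U_{1:t-1})$; Proposition~\ref{prop:partial-independent}, combined with the independence of the observation noises $\{\tilde W^j_t\}$ across $j$, yields conditional independence of the processes $\{Y^j_{1:t}\}_{j=1}^n$ given $(Z_{1:t}, \VEC U_{1:t-1})$, so the conditional distribution of $\VEC U^{-i}_t$ given $(R^i_{1:t}, U^i_{1:t})$ depends only on $(Z_{1:t}, \VEC U_{1:t-1})$. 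The variable $Z_{t+1}$ is then determined by $(Z_t, \VEC U_t)$ through~\eqref{eq:shared}, and Lemma~\ref{lemma:xi-update} writes $\Xi^i_{t+1}$ as a deterministic function of $(\Xi^i_t, Y^i_{t+1}, Z_t, \VEC U_t)$, so the only residual randomness is in $Y^i_{t+1}$.

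The key technical step is the tower-property identity
\[
  \PR(X^i_t = x \mid R^i_{1:t}, U^i_{1:t}) = \Xi^i_t(x),
\]
which holds because $\Xi^i_t$ is by definition the posterior of $X^i_t$ given the finer sigma algebra $\sigma(Y^i_{1:t}, Z_{1:t}, \VEC U_{1:t-1})$ and is itself measurable with respect to $\sigma(R^i_{1:t}, U^i_{1:t})$. With this identity in hand, the conditional law of $Y^i_{t+1}$ is determined by $(\Xi^i_t, Z_t, \VEC U_t)$ through~\eqref{eq:dynamics} and the observation channel, completing the Markov step. The instantaneous-cost claim follows by the same ingredients: conditional on $(R^i_{1:t}, U^i_{1:t})$, the joint distribution of $(\VEC X_t, \VEC U^{-i}_t)$ factors across agents and depends on the conditioning variables only through $(\Xi^i_t, Z_{1:t}, \VEC U_{1:t-1}, U^i_t)$, so the expectation of $c_t(Z_t, \VEC X_t, \VEC U_t)$ collapses to a function of $(R^i_t, U^i_t)$.

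The main obstacle I anticipate is the bookkeeping needed to confirm that the full histories $\Xi^i_{1:t-1}$ and $U^i_{1:t-1}$ in the conditioning carry no extra information about the other agents' variables beyond $(\Xi^i_t, Z_{1:t}, \VEC U_{1:t-1})$. This should follow from the same person-by-person argument that underlies Propositions~\ref{prop:partial-independent} and~\ref{prop:xi-independent}, combined with the tower-property identity above, but the chain of conditional independences must be marshaled with care.
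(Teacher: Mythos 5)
Your proposal is correct and follows essentially the same route as the paper's proof: expand the conditional law of $R^i_{t+1}$ using the filter update of Lemma~\ref{lemma:xi-update}, use the fact that $\Xi^i_t$ is the posterior of $X^i_t$ under the coarser conditioning (your tower-property identity, which the paper uses implicitly when it inserts the factor $\xi^i_t(x^i_t)$), and decouple $\VEC U^{-i}_t$ from agent~$i$'s local data via Proposition~\ref{prop:partial-independent} together with the independence of the observation noises. The only cosmetic difference is that the paper carries this out as one explicit factorization of $\D \PR(\tilde r^i_{t+1} \mid r^i_{1:t}, u^i_{1:t})$ rather than component by component, and it does not need Proposition~\ref{prop:xi-independent} for this lemma.
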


\begin{proof}[Proof of Proposition~\ref{prop:partial-structure}]
  The result of Proposition~\ref{prop:partial-structure} follows from cyclically
  repeating an argument similar to the argument after Lemma~\ref{lemma:CMP}.
\end{proof}

\subsection* {Steps 2 and 3: Sufficient statistic for common data and its
simplification}

Compare Proposition~\ref{prop:structure} of the full observation model with
Proposition~\ref{prop:partial-structure} of the partial observation model. The
posterior probability $\Xi^i_t$ in the latter model plays the role of local
state $X^i_t$ in the former model. This suggests that we may follow Steps 2 and
3 of the full observation model in the partial observation model by replacing
$X^i_t$ by $\Xi^i_t$. Following this suggestion, define:

\begin{definition} \label{def:partial-pi}
  Let
  $\mathring \Pi_t$ denote the posterior probability on $(Z_t, \VEC \Xi_t)$
  given the common information $C_t$, i.e., for any $z \in \ALPHABET Z$ and any
  Borel subsets $E^i$ of $\PSP {X^i}$ and $\VEC E = (E^1, \dots, E^n)$, 
  \begin{equation}
    \mathring \Pi_t(z, \VEC E) = \PR(Z_t = z, \VEC \Xi_t \in \VEC E \mid C_t)
  \end{equation}
\end{definition}
\begin{definition} \label{def:partial-theta}
  Let $\mathring \Theta^i_t$, $t=1,\dots,T$, denote the
  posterior probability of $\Xi^i_t$ given the common information $(Z_{1:t}, \VEC
  U_{1:t-1})$, i.e., for any Borel subset $E^i$ of $\PSP {X^i}$, 
  \[ \mathring \Theta^i_t(E^i) \DEFINED \PR( \Xi^i_t \in E^i \mid Z_{1:t}, \VEC U_{1:t-1}). \]
\end{definition}

Now, by following the exact same argument as in Steps 2 and 3 for the full
observation model, we get that Propositions~\ref{prop:coordinator-structure}
and~\ref{prop:coordinator-DP} and Theorems~\ref{thm:structure} and~\ref{thm:DP}
are also true for the partial observation model if we replace $\Pi_t$ and
$\Theta^i_t$ by $\mathring \Pi_t$ and $\mathring \Theta^i_t$, respectively.

\section {Extension to infinite horizon} \label{sec:infinite}

In this section, we extend the result of structural result of
Theorem~\ref{thm:structure} and the dynamic programming decomposition of
Theorem~\ref{thm:DP} to a time-homogeneous system that runs for an infinite
horizon under the discounted cost optimality criterion.

In the model of Section~\ref{sec:model}, assume that the plant function $f^i_t$,
$i=0,\dots,n$, and the cost function $c_t$ are time-invariant and are denoted by
$f^i$ and $c$, respectively. Furthermore, in the partial observation model
assume that the observation function $\ell^i_t$, $i=1,\dots,n$ are
time-invariant and are denoted by $\ell^i$. Such a system is called a
\emph{time-homogeneous system}. 

Assume that the system runs indefinitely. Define the performance of a control
strategy $\VEC g \DEFINED (\VEC g_1, \VEC g_2, \dots)$ as
\begin{equation}
  J_\beta(\VEC g) \DEFINED \lim_{T \to \infty} \EXP\Big[
  \sum_{t=1}^T \beta^{t-1} c(Z_t, \VEC X_t, \VEC U_t) \Big],
  \label{eq:cost-discount}
\end{equation}
where $\beta \in (0,1)$ is called the discount factor.

We are interested in the following optimization problem.
\begin{problem} \label{prob:discounted}
  Given a discount factor $\beta$, the distributions $P_Z$, $P_{X^i|Z}$,
  $P_{W^i}$, $P_{\tilde W^i}$ of the initial shared state, initial local state,
  plant disturbance of subsystem~$i$, and observation noise of subsystem~$i$
  (for the partial observation model), $i=1,\dots,n$, and the cost functions $c$,
  find a control strategy $\VEC g$ that minimizes the expected
  discounted cost given by~\eqref{eq:cost-discount}.
\end{problem}

Since the sufficient statistic in Theorem~\ref{thm:DP} takes value in a
time-invariant space, the results of the finite horizon system extend to
infinite horizon in the usual manner. Proposition~\ref{prop:independent} remains
valid for an infinite horizon system as well. Consequently, so do the
structural results of Proposition~\ref{prop:structure}. Therefore, we can use
the approach of~\cite{NayyarMahajanTeneketzis:2012} to obtain the infinite
horizon version of the dynamic program of Proposition~\ref{prop:coordinator-DP}.
Using Lemma~\ref{lemma:equivalent}, the dynamic program simplifies as follows:

\begin{theorem}\label{thm:infinite}
  There exists an optimal control strategy that is time homogeneous. An
  optimal choice of the partially evaluated control strategy $\tilde d$ of
  $\tilde g$ is given by solution of the following fixed point
  equation\footnote{Due to the discounting of future costs, \eqref{eq:DP-inf}
  has a fixed point that is unique.} for the full observation model:
  \begin{equation}
    V(z, \VEC \theta) = \min_{\tilde{\VEC d}} \EXP \Big[
      c(Z_t, X_t, U_t) + \beta V(Z_{t+1}, \VEC \Theta_{t+1})
      \Bigm| Z_t = z, \VEC \Theta_t = \VEC \theta, \tilde{\VEC D}_t =
      \tilde{\VEC d} \Big]
      \label{eq:DP-inf}
  \end{equation}
  and, by replacing $\Theta^i_t$ by $\mathring \Theta^i_t$ in the above equation
  for the partial observation model. (The above equation is time homogeneous; we
  are using time $t$ for ease of notation.)
\end{theorem}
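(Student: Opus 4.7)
The plan is to recast the problem, via Theorems~\ref{thm:structure} and~\ref{thm:DP}, as a centralized Markov decision process for a fictitious coordinator whose state is $(Z_t, \VEC \Theta_t)$ and whose action is the partially evaluated law $\tilde{\VEC D}_t$, and then invoke standard infinite-horizon discounted dynamic programming theory on this MDP.

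First, I would verify that the coordinator's MDP is time-homogeneous whenever the underlying system is. The coordinator's state space $\ALPHABET Z \times \prod_{i=1}^n \PSP{\ALPHABET X^i}$ and action space $\prod_{i=1}^n (\ALPHABET U^i)^{\ALPHABET X^i}$ are both time-invariant (the latter is finite under the standing finite-alphabet assumption). Because $f^0$ and $f^i$ do not depend on $t$, the update in Lemma~\ref{lemma:theta} collapses to a time-invariant map $F$ with $\VEC \Theta_{t+1} = F(\VEC \Theta_t, Z_{t+1}, \VEC U_t, \tilde {\VEC D}_t)$, and combining this with $Z_{t+1} = f^0(Z_t, \VEC U_t, W^0_t)$ yields a transition kernel on $(Z_t, \VEC \Theta_t)$ that depends on $(z, \VEC \theta, \tilde{\VEC d})$ alone. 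Similarly, the conditional per-step cost $\bar c(z, \VEC \theta, \tilde{\VEC d}) \DEFINED \EXP[\,c(Z_t, \VEC X_t, \VEC U_t) \mid Z_t=z, \VEC \Theta_t = \VEC \theta, \tilde{\VEC D}_t = \tilde{\VEC d}\,]$ is time-invariant and, being a finite average of bounded quantities, is bounded.

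Second, I would argue that the structural reductions survive the passage to an infinite horizon. The conditional independence in Proposition~\ref{prop:independent} is a per-$t$ statement whose proof (Appendix~\ref{app:independent}) is an induction that never uses a terminal time, so it is valid for all $t \in \naturalnumbers$. Consequently, the shedding argument behind Proposition~\ref{prop:structure} and the passage from $\Pi_t$ to the factored statistic $(Z_t, \VEC \Theta_t)$ in Lemma~\ref{lemma:equivalent} go through for the infinite-horizon problem as well. Applying the common-information framework of~\cite{NayyarMahajanTeneketzis:2012} then gives an infinite-horizon version of the coordinator's MDP with state $(Z_t, \VEC \Theta_t)$, action $\tilde{\VEC D}_t$, transition kernel as above, and one-step cost $\bar c$.

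Third, I would close the loop with Banach's fixed-point theorem. Since $\bar c$ is bounded and $\beta \in (0,1)$, the Bellman operator associated with the right-hand side of~\eqref{eq:DP-inf} is a $\beta$-contraction on the Banach space of bounded Borel measurable functions on $\ALPHABET Z \times \prod_{i=1}^n \PSP{\ALPHABET X^i}$; this yields existence and uniqueness of the bounded solution $V$. Because the coordinator's action set is finite, the minimum is attained and admits a Borel selector, producing a stationary $\tilde{\VEC d}$. The standard verification theorem for discounted MDPs (e.g., \cite{HernandezLermaLasserre:1996}) then identifies the induced stationary strategy as optimal among all history-dependent coordinator strategies; pulling back through Theorem~\ref{thm:structure} gives the claim for the full observation model, and the same argument with $\mathring \Theta^i_t$ replacing $\Theta^i_t$ gives the partial observation model.

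The main obstacle I anticipate is technical rather than conceptual: verifying measurability and (weak) continuity of $F$ and $\bar c$ on the continuous-valued component $\prod_{i=1}^n \PSP{\ALPHABET X^i}$ so that a Borel selector exists. Under finite alphabets, $F$ is given by an explicit Bayes update whose denominator is bounded away from zero on its support, and $\bar c$ is a continuous function of $\VEC \theta$, so both checks reduce to elementary calculations and the abstract MDP machinery delivers the stated time-homogeneous optimal strategy.
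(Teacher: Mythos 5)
Your proposal is correct and follows essentially the same route as the paper: the paper's own argument is precisely the chain you describe---Proposition~\ref{prop:independent} (and hence Proposition~\ref{prop:structure}) holds for all $t$, the common-information approach of~\cite{NayyarMahajanTeneketzis:2012} yields a coordinator MDP on the time-invariant statistic, Lemma~\ref{lemma:equivalent} simplifies it to $(Z_t, \VEC\Theta_t)$, and discounted-MDP theory gives the stationary fixed-point characterization. The only difference is one of detail: the paper dispatches the last step with ``the results extend in the usual manner'' plus a footnote on uniqueness via discounting, whereas you spell out the contraction-mapping, selector, and verification arguments that this phrase implicitly invokes.
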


\section {An example: Multiaccess broadcast} \label{sec:MAB}

In this section, we reconsider the multiaccess broadcast system described in
Section~\ref{sec:applications} and show how the results of this paper provide
new insights for that system.

\subsection {The model} \label{sec:MAB-model}

Recall that a two-user multiaccess system consists is a special case of the full
observation model with $\ALPHABET X^i = \ALPHABET U^i = \ALPHABET W^i_t =
\{0,1\}$, $i=1,2$, and
$\ALPHABET Z = \emptyset$. The state dynamics of user~1 are given by:
\( X^1_{t+1} = \max( X^1_t + U^1_t \cdot (1- U^2_t) + W^1_t, 1). \)
The dynamics of user~2 are symmetric dual of the above.  Each user chooses a
transmission decision as
\( U^i_t = g^i_t(X^i_{1:t}, \VEC U_{1:t-1}) \)
where only actions $U^i_t \le X^i_t$ are feasible.
The per unit reward function $c(\VEC x, \VEC u) = u^1
\oplus u^2$, where $\oplus$ means binary XOR. The objective is to maximize the
total average reward over an infinite horizon given by
\begin{equation}
  \bar J(\VEC g) = \lim_{T \to \infty} \frac 1T \EXP \Big[
    \sum_{t=1}^T U^1_t \oplus U^2_t \Big]
  \label{eq:MAB-reward}
\end{equation}
which corresponds to maximizing the average throughput.

The case of symmetric arrivals ($p^1 = p^2)$ was considered
in~\cite{HluchyjGallager:1981}, who found a lower bound on performance by
finding the best \emph{window protocol} strategies. An upper bound was for the
symmetric case was computed numerically in~\cite{OoiWornell:1996} by
considering a more informative information structure. The analytic lower bounds
of~\cite{HluchyjGallager:1981} match the numerical upper bound
of~\cite{OoiWornell:1996}; hence, the strategy proposed
in~\cite{HluchyjGallager:1981} is optimal. A dynamic programming decomposition
for the general model was presented in~\cite{MahajanNayyarTeneketzis:2008}.


The multiaccess broadcast system corresponds to the full observation model.
Therefore, the results of this
paper\footnote{\label{fnt:average} Although in Section~\ref{sec:infinite}, we
only considered the infinite horizon discounted cost criterion, the same
argument also works for the average reward per unit time.} provide a structure
of optimal transmission policies and a dynamic programming decomposition. For
the symmetric arrival case ($p^1 = p^2$), we solve the corresponding dynamic
program in closed form, and give an analytic derivation of the optimal strategy.

\subsection {Structure of optimal transmission policies and dynamic programming
decomposition}
\label{sec:MAB-control-sharing}

Since $Z_t = \emptyset$, the information state $\VEC \Theta_t = (\Theta^1_t,
\dots, \Theta^n_t)$ of Definition~\ref{def:theta} simplifies to
\( \Theta^i_t(x) = \PR^{\tilde {\VEC g}}(X^i_t = x \mid \VEC U_{1:t-1}). \)
Theorem~\ref{thm:structure} implies that there is no loss of optimality
in restricting attention to control strategies of the form
\(U^i_t = g^i_t(X^i_t, \VEC \Theta_t)\)
and Theorem~\ref{thm:DP} gives the corresponding dynamic program to find the
optimal transmission strategies. 

To succinctly describe the dynamic program, we simplify the notation as follows:
\begin{enumerate}
  \item The functional map $\tilde d^i_t$ from $\ALPHABET X^i$ to $\ALPHABET
    U^i$ is completely specified by $\tilde d^i_t(1)$ because $\tilde d^i_t(0)$
    must be zero as $u^i_t = \tilde d^i_t(x^i_t) \le x^i_t$ and $\ALPHABET X^i =
    \ALPHABET U^i = \{0,1\}$. We denote $\tilde d^i_t(1)$ by $S^i_t \in \{0,1\}$.
    Then, \( u^i_t = x^i_t \cdot s^i_t. \)
  \item Since $\Theta^i_t$ is a probability distribution of a binary valued
    random variable, it is completely specified by its component
    $\Theta^i_t(1)$, which we denote by $Q^i_t$.
\end{enumerate}

To present the update of $\VEC Q_t$, we define the following operators.
\begin{definition} \label{def:A}
  Let $A_i$, $i=1,2$, be an operator from $[0,1]$ to $[0,1]$ defined for any $q \in [0,1]$
  as
  \( A_i q = 1 - (1-p^i)(1-q) \)
  where $p^i$ is the arrival rate at the queue $i$. Then, \( A_i^n q = 1 -
  (1-p^i)^n (1-q)\), and for any $q \in (0,1)$, $A^{n}_i q < A^{n+1}_i q$. 
\end{definition}

Lemma~\ref{lemma:theta} shows that the 
information state $\VEC q_t = (q^1_t, q^2_t)$ updates according to a non-linear
filter \( F(\VEC q_t, \VEC u_t,
\VEC s_t) \)
where
\begin{equation} \label{eq:simplified-F}
 F( (q^1,q^2), \VEC u, \VEC s) = 
  \begin{cases}
    (A_1q^1, A_2q^2), & \text{if $\VEC s = (0,0)$} \\
    (   p^1, A_2q^2), & \text{if $\VEC s = (1,0)$} \\
    (A_1q^1,  p^2), & \text{if $\VEC s = (0,1)$} \\
    (   1,    1), & \text{if $\VEC s = (1,1)$ and $\VEC u = (1,1)$} \\
    ( p^1,  p^2), & \text{if $\VEC s = (1,1)$ and $\VEC u \neq (1,1)$}.
  \end{cases}
\end{equation}
Substituting this update function in the infinite horizon average reward
per unit time version of the dynamic program of
Theorem~\ref{thm:DP}, we get
\begin{proposition} \label{prop:MAB-DP}
  For the two-user multiaccess broadcast system, there is no loss in optimality
  in restricting attention to time-homogeneous transmission strategies of the form
  \[ U^i_t = \tilde g^i_t(X^i_t, \VEC Q_t) = S^i_t(\VEC Q_t) \cdot X^i_t. \]
  An optimal strategy of such form is given by the solution of the following fixed point
  equation:
  \begin{equation}
    v(q^1, q^2) + J^* = \max \{ v_{10}(q^1, q^2), v_{01}(q^1, q^2), v_{11}(q^1,
  q^2) \}
  \end{equation}
  where $J^*$ denotes the average reward per unit time, $v(q^1, q^2)$ is the
  relative value function at $(q^1, q^2)$ and $v_{ij}(q^1, q^2)$ is the 
  relative value-action function at $(q^1,q^2)$ when $(s^1, s^2)$ is chosen to be $(i,j)$,
  $i, j \in \{0,1\}$, \emph{i.e.}, 
  \begin{gather*}
    v_{10}(q^1,q^2) = q^1 + v(p^1, A_2 q^2), \\ 
    v_{01}(q^1,q^2) = q^2 + v(A_1 q^1, p^2),
    \\
    v_{11}(q^1,q^2) = q^1 + q^2 - 2 q^1 q^2 + q^1 q^2 v(1,1) + (1-q^1q^2)
    v(p^1,p^2).
  \end{gather*}
\end{proposition}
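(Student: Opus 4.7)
The plan is to specialize Theorems~\ref{thm:structure} and~\ref{thm:DP} to the multiaccess model of Section~\ref{sec:MAB-model}, pass to the infinite horizon average reward setting, and then simplify the resulting dynamic program using the binary nature of the state and action spaces.

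Since $\ALPHABET Z = \emptyset$ and the problem is time-homogeneous, the infinite horizon extension of Theorem~\ref{thm:structure} (invoked via footnote~\ref{fnt:average}) yields stationary optimal strategies of the form $U^i_t = \tilde g^i(X^i_t, \VEC \Theta_t)$. Because $\ALPHABET X^i = \ALPHABET U^i = \{0,1\}$ and feasibility forces $U^i_t \le X^i_t$, the partially evaluated control law $\tilde d^i$ is fully specified by the bit $S^i \DEFINED \tilde d^i(1)$, so $u^i = s^i x^i$. Similarly, $\Theta^i_t$, being a distribution on $\{0,1\}$, is completely captured by the scalar $Q^i_t \DEFINED \Theta^i_t(1)$. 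Together these reductions produce the structural form asserted in the proposition.

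Next, I would derive the right-hand side of the Bellman equation term by term. By Proposition~\ref{prop:independent}, $X^1_t$ and $X^2_t$ are conditionally independent given $\VEC U_{1:t-1}$, so the expected one-step reward $\EXP[U^1_t \oplus U^2_t \mid \VEC Q_t = \VEC q, \VEC S_t = \VEC s]$ can be written in closed form: it evaluates to $0$, $q^1$, $q^2$, and $q^1 + q^2 - 2 q^1 q^2$ for $\VEC s = (0,0), (1,0), (0,1), (1,1)$ respectively. For the expected continuation value I would use the explicit filter~\eqref{eq:simplified-F}: $\VEC Q_{t+1}$ is a deterministic function of $\VEC q$ whenever $\VEC s \ne (1,1)$, and when $\VEC s = (1,1)$ the collision event $\VEC u = (1,1)$ occurs with probability $q^1 q^2$ (sending $\VEC Q_{t+1}$ to $(1,1)$) while its complement (probability $1 - q^1 q^2$) sends $\VEC Q_{t+1}$ to $(p^1,p^2)$. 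Substituting these expressions into the average reward optimality equation $v(\VEC q) + J^* = \max_{\VEC s}\{\text{reward}(\VEC q,\VEC s) + \EXP[v(\VEC Q_{t+1}) \mid \VEC q,\VEC s]\}$ produces the three action-value expressions $v_{10}, v_{01}, v_{11}$ exactly as displayed.

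Two points then require additional justification. First, the passage from the finite horizon Theorem~\ref{thm:DP} to an average reward optimality equation must be established; here rewards are bounded in $[0,1]$ and the reachable state space under~\eqref{eq:simplified-F} is a countable subset of $[0,1]^2$, so a standard vanishing discount argument applied to Theorem~\ref{thm:infinite} produces a bounded relative value function $v$ and optimal gain $J^*$. Second, and this is the main obstacle, the idle action $\VEC s = (0,0)$ must be shown not to strictly beat all three displayed options, i.e., $v_{00}(\VEC q) \le \max\{v_{10}(\VEC q), v_{01}(\VEC q), v_{11}(\VEC q)\}$. A direct computation gives $v_{10}(\VEC q) - v_{00}(\VEC q) = q^1 + v(p^1, A_2 q^2) - v(A_1 q^1, A_2 q^2)$, so it suffices to prove a Lipschitz-type bound $v(A_1 q^1, q^2) - v(p^1, q^2) \le q^1$ — intuitively, the extra mass on ``user~1 holds a packet'' can contribute at most one unit of future throughput per unit of probability. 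I would establish this by a coupling argument applied to the discounted value iterates (each extra stored packet departs at most once, contributing at most one unit of reward) and then transfer the inequality to the average reward relative value function via the vanishing discount limit.
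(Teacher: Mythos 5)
Your proposal follows essentially the same route as the paper: specialize Theorems~\ref{thm:structure} and~\ref{thm:DP} (in their infinite-horizon form, per footnote~\ref{fnt:average}) to the binary MAB model, encode the decision rules by $S^i = \tilde d^i(1)$ and the beliefs by $Q^i = \Theta^i_t(1)$, and substitute the filter~\eqref{eq:simplified-F} into the average-reward dynamic program to obtain the three action-value expressions. The two points you single out for extra justification are precisely the ones the paper handles informally---the average-reward passage is covered only by footnote~\ref{fnt:average}, and the domination of $(s^1,s^2)=(0,0)$ by $(1,0)$ is merely asserted in the remark following the proposition---so your vanishing-discount and Lipschitz/coupling arguments fill in gaps the paper leaves open rather than taking a different route.
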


\paragraph*{Some remarks}
  \begin{enumerate}
    \item We ruled out the action $(s^1, s^2) = (0,0)$ because it is dominated
      by the action $(s^1, s^2) = (1,0)$. 

    \item The information state $\VEC q$ takes values in the uncountable set
      $[0,1]^2$. However, the form of the non-linear filter $F$~\eqref{eq:simplified-F}
      implies that the reachable set of $\VEC q$ is countable and is given by
      \[ \ALPHABET R = \{ (1,1), (1,p^2), (p^1,1), (p^1,p^2) \} 
          \cup
          \{ (p^1, A_2^n p^2) : n \in \naturalnumbers \}
          \cup
          \{ (A_1^np_1, p^2) : n \in \naturalnumbers \}
      \]
      Thus, we need to solve the dynamic program of Proposition~\ref{prop:MAB-DP} only for $\VEC
      q \in \ALPHABET R$. 
    \item A similar dynamic programming decomposition for the two-user
      multiaccess broadcast channel was derived
      in~\cite{MahajanNayyarTeneketzis:2008},
      but~\cite{MahajanNayyarTeneketzis:2008} did not completely exploit the
      information structure of the system. In particular,
      \cite{MahajanNayyarTeneketzis:2008} \emph{a priori} restricted attention
      to transmission strategies of the form~\eqref{eq:structure-local} while we
      show that such a restriction is without loss of optimality. Furthermore,
      the dynamic program in~\cite{MahajanNayyarTeneketzis:2008} is similar to
      that of Proposition~\ref{prop:coordinator-DP} while we use a simpler form
      of the dynamic program (Theorem~\ref{thm:DP}). As shown above, the reachable
      set of the information state is countable for this simpler form of the dynamic program
      while such a simplification was not possible for the dynamic
      program in~\cite{MahajanNayyarTeneketzis:2008}.
  \end{enumerate}

\subsection{The symmetric arrival case} Assume that both users have symmetric
arrivals, \emph{i.e.}, $p^1 = p^2$. Then the transformation $A_1$ is the same as
$A_2$, and we denote both by $A$. Since the system is symmetric for both users,
we have that for any $q^1, q^2$, the relative value functions $v(q^1, q^2)$ and
$v(q^2, q^1)$ are the same. Therefore $v_{ij}(q^1, q^2) = v_{ji}(q^2, q^1)$, and
consequently, the optimal coordination policy is also symmetric, \emph{i.e.}, 
$h(q^1, q^2) = h(q^2, q^1)$.
      
Using this symmetry, we find a closed form solution of the dynamic program. 
To describe the solution, we first consider the following polynomial and some of its
properties. Let
\( \varphi_n(x)  = 1 + (1-x)^2 - (3+x)(1-x)^{n+1}. \)
Note that
\begin{enumerate}
  \item $\varphi_n(0) = -1$ and $\varphi_n(1) = 1$. Thus, $\varphi_n$
    has a root $\alpha_n$ that lies in the interval $[0,1]$. 
  \item $\varphi_{n+1}(x) = (1-x) \varphi_n(x) + x(1 + (1-x)^2)$. Thus,
    $\varphi_{n+1}(\alpha_n)$ is positive. Recall that $\varphi_n(0) = -1$.
    Thus, $\alpha_{n+1}$ lies in the interval $[0, \alpha_n]$. Hence the
    sequence $\{\alpha_n\}$ is decreasing. 
\end{enumerate}

Let $\tau$ denote the root of $x = (1-x)^2$. Then, $\tau \approx 0.38196 >
\alpha_1 \approx 0.34727$.

\begin{theorem} \label{thm:MAB}
  For the symmetric arrival case, $p^1 = p^2 = p$, the optimal solution $J^*$ to the
  dynamic program of Proposition~\ref{prop:MAB-DP} is given by
  \begin{equation}
    J^* = \begin{cases}
      (1 - (1-p)^2), & \text{if $p \ge \alpha_1$}, \\
      p(1-(2p^2 - 1))/(1 + p^2 + p^3), & \text{otherwise}.
    \end{cases}
  \end{equation}
  The corresponding optimal strategy $h^*(q^1,q^2)$, $(q^1, q^2) \in \ALPHABET
  R$,  is given by
  \begin{enumerate}
    \item For $p \ge \tau$, 
      \[ h^*(q^1, q^2) = \begin{cases}
          (1,0), & \text{if $q^1 > q^2$}, \\
          (0,1), & \text{if $q^1 < q^2$}, \\
          (1,0) \text{ or } (0,1),& \text{if $q^1 = q^2$}.
        \end{cases} \]
    \item For $p < \tau$, let $n \in \naturalnumbers$ be such that
      $\alpha_{n+1} < p \le \alpha_n$. Then,
      \[ h^*(q^1, q^2) = \begin{cases}
          (1,1), & \text{if $q^1 \le A^np$ and $q^2 \le A^np$}, \\
          (1,0), & \text{if $q^1 > \max(A^n p, q^2)$}, \\
          (0,1), & \text{if $q^2 > \max(A^n p, q^1)$}, \\
          (1,0) \text{ or } (0,1),& \text{if $q^1 = q^2 = 1$}.
        \end{cases} \]
  \end{enumerate}
\end{theorem}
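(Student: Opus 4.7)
\emph{Approach.} The plan is to apply the average-reward verification theorem to the dynamic program of Proposition~\ref{prop:MAB-DP}: exhibit a bounded function $v$ on the countable reachable set $\ALPHABET R$ and a constant $J^*$ such that $v(q^1, q^2) + J^* = \max\{v_{10}(q^1, q^2), v_{01}(q^1, q^2), v_{11}(q^1, q^2)\}$ at every $(q^1, q^2) \in \ALPHABET R$, with $h^*$ attaining the maximum. Both $v$ and $J^*$ are obtained by policy evaluation of the conjectured $h^*$, and optimality then follows by verifying that no alternative feasible action improves the right-hand side anywhere in $\ALPHABET R$. Throughout I exploit the symmetry $v(q^1, q^2) = v(q^2, q^1)$, which is immediate from $p^1 = p^2$.

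\emph{Case $p \geq \tau$.} Under pure time-sharing the chain enters the two-state cycle $(p, Ap) \leftrightarrow (Ap, p)$, with each transition yielding expected reward $Ap = 1 - (1-p)^2$; hence $J^* = 1 - (1-p)^2$. Normalizing $v(p, Ap) = v(Ap, p) = 0$ and back-solving the Poisson equation on the rest of $\ALPHABET R$ yields $v(p, p) = -p(1-p)$, $v(p, 1) = v(1, p) = (1-p)^2$, and $v(1, 1) = 2(1-p)^2$. The Bellman verification is routine at every state except $(p, p)$, where comparing $v_{11}(p, p)$ with $v_{10}(p, p) = v_{01}(p, p) = p$ reduces, after expansion, to $p^2 - 3p + 1 \leq 0$, equivalently $p \geq \tau$. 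This simultaneously proves optimality of time-sharing for $p \geq \tau$ and identifies the threshold below which $(1,1)$ must be played at $(p, p)$.

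\emph{Case $p < \tau$.} For the window-$n$ strategy with $n$ chosen so that $\alpha_{n+1} < p \leq \alpha_n$, the ergodic portion of the Markov chain induced on $\ALPHABET R$ is finite, and its balance equations yield the rational closed form for $J^*$ stated in the theorem. The relative value $v$ on the ergodic set follows from a small linear system; I then extend $v$ to the countably many non-ergodic states $(p, A^k p)$ and $(A^k p, p)$, $k \geq 1$, by one step of the Poisson equation under the conjectured action---namely $(1,1)$ for $k \leq n$ and a time-sharing move for $k > n$. The two critical Bellman inequalities---that $(1,1)$ beats time-sharing at the extreme in-window state $(p, A^n p)$, and that time-sharing beats $(1,1)$ at the nearest out-of-window state $(p, A^{n+1} p)$---rearrange, after substitution of $v$ and $J^*$, into $\varphi_{n+1}(p) \geq 0$ and $\varphi_n(p) \leq 0$ respectively. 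By definition of $\alpha_n$ these are exactly the hypotheses $p > \alpha_{n+1}$ and $p \leq \alpha_n$. The recursion $\varphi_{n+1}(x) = (1-x)\varphi_n(x) + x(1 + (1-x)^2)$ and the monotone decrease of $\{\alpha_n\}$ let the verification be carried out uniformly in $n$ by induction.

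\emph{Main obstacle.} The delicate step is the symbolic propagation of $v$ along the chain of states $(p, A^k p)$, $k = 0, 1, \dots$, and the recognition that the resulting Bellman inequality coincides with the sign of $\varphi_n(p)$. One must separate carefully the collision branch (which resets to the base ergodic set) from the non-collision branch, and see that the telescoping sum that emerges is precisely what the definition of $\varphi_n$ encodes. Once this alignment between the dynamic-programming recursion and the polynomial family is made explicit, everything else---evaluating $J^*$, checking every Bellman equation on $\ALPHABET R$, and pinning down the optimal $n$---reduces to bookkeeping.
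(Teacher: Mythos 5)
Your route is the paper's route: Appendix~\ref{app:MAB} also proves the theorem by exhibiting a solution $(J^*,v)$ of the fixed-point equation of Proposition~\ref{prop:MAB-DP} on the countable reachable set $\ALPHABET R$ (Lemma~\ref{lemma:DP}) and then verifying, state by state, that the conjectured actions attain the maximum, with the thresholds appearing as signs of $\varphi_n(p)$. Your policy-evaluation derivation reproduces the paper's values up to an additive constant: your Case-1 values $-p(1-p)$, $(1-p)^2$, $2(1-p)^2$ are the paper's $v^0=p$, $v^\infty=1$, $v^*=2-Ap$ shifted down by $Ap$, and your threshold inequality $p^2-3p+1\le 0$ at $(p,p)$ is the paper's $b^0-c^0=p^2\bigl(p-(1-p)^2\bigr)\ge 0$. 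However, your Case $p<\tau$ contains two concrete errors.

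First, you cannot treat all window indices uniformly. For $p\in(\alpha_1,\tau]$ the window index is $n=0$, and then $Ap>A^0p=p$, so the conjectured strategy time-shares at $(Ap,p)$ and $(p,Ap)$: the induced chain is absorbed into the two-cycle $\{(Ap,p),(p,Ap)\}$, the states $(p,p)$, $(1,1)$, $(p,1)$ are transient, and policy evaluation gives $J^*=Ap=1-(1-p)^2$, \emph{not} the rational form. (For $n\ge 1$ the recurrent class is $\{(p,p),(1,1),(p,1),(Ap,p)\}$ and the balance equations give $J^*=p\bigl(1-\tfrac{2p^2-1}{1+p^2+p^3}\bigr)$, as in Lemma~\ref{lemma:DP}; note the theorem's display misplaces a parenthesis here.) This collapse at $n=0$ is precisely why the theorem's formula for $J^*$ switches at $\alpha_1$ rather than at $\tau$, and why the paper needs a separate Case~2 of Lemma~\ref{lemma:DP}, with different relative values ($v^0=1-Ap$), for $p\in(\alpha_1,\tau]$. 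As written, your Case~2 asserts the rational form on $(\alpha_1,\tau)$ and thereby contradicts the theorem you are proving. Second, your two ``critical Bellman inequalities'' are crossed. With the window values (paper's Case~3) and $\zeta(p)=1+p^2+p^3$, the comparison of time-sharing against $(1,1)$ at \emph{any} state $(p,A^kp)$ involves only the recurrent-class values $(v^*,v^0,v^1)$ and equals $v_{01}-v_{11}=p^2\varphi_k(p)/\zeta(p)$; the index of $\varphi$ matches the index of the state. Hence the in-window check at $(p,A^np)$ rearranges to $\varphi_n(p)\le 0$, i.e., $p\le\alpha_n$, and the out-of-window check at $(p,A^{n+1}p)$ rearranges to $\varphi_{n+1}(p)\ge 0$, i.e., $p\ge\alpha_{n+1}$ --- the opposite of your assignment (cf.\ items 3--5 of Case~3 in Appendix~\ref{app:MAB}). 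Because the conjunction of the two conditions is the same, the final conclusion survives this slip, but the algebraic identities you assert are false as stated and would derail the ``bookkeeping'' you defer.
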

The proof is presented in Appendix~\ref{app:MAB}.

Although the optimal policy looks complicated with different behavior depending
on the value of $p$, it has only two modes of operation. When $p \ge
\tau$, the set of states $\{(p,Ap), (Ap,p)\}$ is absorbing and forms a
recurrence class in $\ALPHABET R$. Within this recurrence class, the optimal
policy is a round-robin policy. When $p < \tau$, the set of states $\{(1,1), (p,
Ap), (Ap, p), (p,p)\}$ is absorbing and forms a recurrence class in $\ALPHABET
R$. Within this recurrence class, the optimal policy is identical for all $p <
\tau$. The system starts with $(q^1, q^2) = (p,p)$ and chooses $(s^1, s^2) =
(1,1)$, which means that each user transmits if it has a packet. If no collision
occurs, then the next state remains $(p,p)$. If a collision occurs, $(q^1, q^2)
= (1,1)$ and both users know that both of them have a packet. So, they simply
empty their buffer one by one, say first $(s^1,s^2) = (1,0)$, and then $(s^1,
s^2) = (0,1)$, and go back to ``transmit if you have a packet'' action:
$(s^1,s^2) = (1,1)$. This policy is identical to the optimal window protocol
proposed in~\cite{HluchyjGallager:1981}. Unlike~\cite{HluchyjGallager:1981}, who
showed that this strategy is the best transmission strategy when restricted to
window protocols, we have shown that this strategy is the best strategy over
the class of all transmission protocols. 

\section {Discussion and Conclusion} \label{sec:conclusion}

Systems with control sharing information structure arise in a variety of
communication applications. In this paper, we presented a three step approach to
identify sufficient statistic and dynamic programming decomposition for coupled
subsystems with control sharing.

The general decentralized control system with control sharing does not admit a
tractable dynamic programming decomposition. Our solution approach works because
the subsystems are coupled \emph{only} through control actions $\VEC U_t$ and
shared state $Z_t$, \emph{but not through local states $\VEC X_t$}. In particular,
if the system dynamics were of the form
\begin{equation} \label{eq:general-dynamics}
  X^i_{t+1} = f^i_t(Z_t, \VEC X_t, \VEC U_t, W^i_t) 
\end{equation}
instead of~\eqref{eq:dynamics}, then Propositions~\ref{prop:independent}
and~\ref{prop:partial-independent} will fail, and consequently, Step~1 of our
approach would not simplify the control strategies. 

In addition, the final sufficient statistics $\VEC \Theta_t$ and $\mathring
{\VEC \Theta}_t$ derived in Step~3 are simpler than the general sufficient
statistics $\Pi_t$ and $\mathring \Pi_t$, which are based
on~\cite{NayyarMahajanTeneketzis:2011}, derived in Step~2. In particular, $\Pi_t
\in \PSP*{\ALPHABET X^1 \times \cdots \times \ALPHABET X^n}$, so its size
increases exponentially with the number of subsystems, while $\mathring \Pi_t
\in \PSP {X^1} \times \cdots \times \PSP {X^n}$, so its size increases linearly
with the number of subsystems. This additional simplification is also a
consequence of the specific form of system dynamics and would fail if the system
dynamics were of the form~\eqref{eq:general-dynamics}.

In itself, it is not surprising that a simpler dynamical model makes the system
easier to design. However, it is important to understand why \emph{this
particular} simplification dynamical model works; such an understanding will
allow for similar simplifications for general non-classical information
structures as well.

The system dynamics given by~\eqref{eq:dynamics} do not remove the incentive to
signal. In particular, control station~$i$ at time~${t+1}$ does not know all
observations of control station~$j$ at time~$t$. Hence, control station~$j$ has
an incentive to signal its local observation to control station~$i$ through its
action~$U^j_t$. Thus, the model is not partially nested~\cite{HoChu:1972} (or
quasi-classical~\cite{MahajanYuksel:2010}). Even after taking the conditional
independence results of Propositions~\ref{prop:independent}
and~\ref{prop:partial-independent} into account, the signaling incentive is
still present due to the cost coupling. Knowing the local state $X^j_t$ of
subsystem~$j$ will help control station~$i$ to improve its choice of action
$X^i_t$ in order to minimize the expected cost to go $\EXP[ \sum_{s=t}^T
c_s(Z_s, \VEC X_s, \VEC U_s) ]$. Thus, the model is not stochastically
nested~\cite{Yuksel:2009} (or $P$-quasi-classical~\cite{MahajanYuksel:2010}).

We may think of the system dynamics of the form~\eqref{eq:dynamics} as a
sufficient condition to obtain a time-invariant sufficient statistic for the
local information at each control stations (Propositions~\ref{prop:structure}
and~\ref{prop:partial-structure}). Once such a sufficient statistic is
identified, the model reduces to a partial history sharing information structure
with \emph{local information taking values in a time-invariant space}.
Thereafter, one can use the results of~\cite{NayyarMahajanTeneketzis:2011} 
obtain a sufficient statistic of the common information at all control stations.

Finding such sufficient conditions (to extend the applicability of a specific
solution technique to more general models) is a recurring theme in decentralized
control. A similar approach has been used in~\cite{Mahajan:phd} to generalize
the solution approach of~\cite{Witsenhausen:1973} to two controller
teams where at least one controller has finite memory; in~\cite{Yuksel:2009} to
generalize the solution approach of~\cite{HoChu:1972} to stochastically nested
information structures; in~\cite{WuLall:2010} to generalize the solution
approach of~\cite{MahajanNayyarTeneketzis:2008} to broadcast information
structures; and in~\cite{MahajanYuksel:2010} to generalize the solution of
classical and quasiclassical information structures to $P$-classical and
$P$-quasiclassical information structures. The model and results of this paper
present such a sufficient condition to extend the results
of~\cite{NayyarMahajanTeneketzis:2011} to control sharing information structure.

\section* {Acknowledgment}

This work was supported by the Natural Sciences and Engineering Research Council of Canada
through the grant NSERC-RGPIN 402753-11. The author is grateful to Ashutosh Nayyar, Demosthenis
Teneketzis, and Serdar Y\"uksel for helpful discussions.

\bibliographystyle{IEEEtran}
\bibliography{IEEEabrv,../../collection,../../personal}

\appendices

\section {Proof of Proposition~\ref{prop:independent}}
\label{app:independent}

For simplicity of notation, we use $\PR(z_{1:t}, \VEC x_{1:t}, \VEC u_{1:t})$ to denote
$\PR(Z_{1:t} = z_{1:t}, \VEC X_{1:t} = \VEC x_{1:t} , \VEC U_{1:t} = \VEC u_{1:t})$ and a similar
notation for conditional probability. Define:
\begin{itemize}
  \item
  \(\alpha^i_t \DEFINED \PR(u^i_t \,|\, z_{1:t}, x^i_{1:t}, \VEC u_{1:t-1})\),
  \(\beta^i_t  \DEFINED \PR(x^i_t \,|\, z_{t-1}, x^i_{t-1}, \VEC u_{t-1})\),
  \(\gamma^i_t \DEFINED \PR(z_t   \,|\, z_{t-1}, \VEC u_{t-1})\);
  and
  \item 
  \(A^i_t \DEFINED \prod_{s=1}^t \alpha^i_s\),
  \(B^i_t \DEFINED \prod_{s=1}^t \beta^i_s\),
  \(\Gamma_t \DEFINED \prod_{s=1}^t \gamma_s\).
\end{itemize}

From law of total probability it follows that:
\(
  \PR(z_{1:t}, \VEC x_{1:t}, \VEC u_{1:t}) = \bigg(\prod_{i=1}^n A^i_t B^i_t\bigg)
  \Gamma_t.
\)
Summing over all realizations of $\VEC x_{1:t}$ and observing that 
$A^i_t$ and $B^i_t$ depends only on $(z_{1:t}, x^i_{1:t}, \VEC u_{1:t})$, we get
\begin{equation*}
  \PR(z_{1:t}, \VEC u_{1:t}) = \sum_{x^1_{1:t}} \sum_{x^2_{1:t}} \cdots \sum_{x^n_{1:t}}
                       \bigg( \prod_{i=1}^n A^i_t B^i_t \bigg) \Gamma_t 
  = \Bigg( \prod_{i=1}^n \bigg( \sum_{x^i_{1:t}} A^i_t B^i_t \bigg)\Bigg)
  \Gamma_t.
\end{equation*}

Thus, using Bayes rule we get
\begin{equation} \label{eq:independent-1}
  \PR(\VEC x_{1:t} \,|\, z_{1:t}, \VEC u_{1:t}) = \prod_{i=1}^n 
  \dfrac {A^i_t B^i_t}
  {\bigg( \sum_{x^i_{1:t}} A^i_t B^i_t \bigg)}
\end{equation}
Summing both sides over $x^i_{1:t}$, $i \neq j$, we get
\begin{equation} \label{eq:independent-2}
  \PR(x^j_{1:t} \,|\, z_{1:t}, \VEC u_{1:t}) = 
  \dfrac {A^j_t B^j_t}
  {\bigg( \sum_{x^j_{1:t}} A^j_t B^j_t \bigg)}
\end{equation}

The result follows from combining~\eqref{eq:independent-1}
and~\eqref{eq:independent-2}.

\section {Proof of Lemma~\ref{lemma:CMP}}
\label{app:CMP}

For ease of notation, we use $\PR(\tilde r^i_{t+1} \,|\, r^i_{1:t}, u^i_{1:t})$
to denote $\PR( {R^i_{t+1} = \tilde r^i_{t+1}} \,|\, {R^i_{1:t} = r^i_{1:t}},
{U^i_{1:t} = u^i_{1:t}})$ and a similar notation for other probability
statements.  
Consider
\begin{multline}
 \PR(\tilde r^i_{t+1} \,|\, r^i_{1:t}, u^i_{1:t}) 
 = \PR(\tilde x^i_{t+1} \,|\, x^i_t, \tilde z_t, \tilde {\VEC u}_t) 
    \cdot
    \PR(\tilde z_{t+1} \,|\, \tilde z_t, \tilde {\VEC u}_t)
   \cdot
   \IND[ \tilde {\VEC u}_{1:t-1} = \VEC u_{1:t-1} ] 
   \cdot
   \IND[\tilde u^i_t = u^i_t] 
     \\
    \cdot
    \IND[ \tilde z_{1:t} = z_{1:t} ] 
    \cdot
   \PR(\tilde {\VEC u}^{-i}_t \,|\, x^i_{1:t}, z_{1:t}, \VEC u_{1:t-1}, u^i_t) 
    \label{eq:CMP-1}
\end{multline}

Simplify the last term of~\eqref{eq:CMP-1} as follows:
\begin{align}
  \hskip 2em & \hskip -2em  
  \PR(\tilde {\VEC u}^{-i}_t \,|\, x^i_{1:t}, z_{1:t}, \VEC u_{1:t-1}, u^i_t)
  \stackrel{(a)}{=}
  \PR(\tilde {\VEC u}^{-i}_t \,|\, x^i_{1:t}, z_{1:t}, \VEC u_{1:t-1}) \notag \\
  &= \sum_{\VEC x^{-i}_{1:t}} \PR( \tilde {\VEC u}^{-i}_t \,|\, \VEC x^{-i}_{1:t},
  z_{1:t}, \VEC u_{1:t-1}) 
  \cdot 
  \PR( \VEC x^{-i}_{1:t} \,|\, x^i_{1:t}, z_{1:t}, \VEC u_{1:t-1})
  \notag \\
  &\stackrel{(b)}=
  \sum_{\VEC x^{-i}_{1:t}} \PR( \tilde {\VEC u}^{-i}_t \,|\, \VEC x^{-i}_{1:t},
  z_{1:t}, \VEC u_{1:t-1}) 
  \cdot
  \PR( \VEC x^{-i}_{1:t} \,|\, z_{1:t}, \VEC u_{1:t-1})
  = 
  \PR(\tilde {\VEC u}^{-i}_t \,|\, z_{1:t}, \VEC u_{1:t-1})
  \label{eq:CMP-2}
\end{align}
where $(a)$ is true because $u^i_t$ is determined by $x^i_{1:t}$ $z_{1:t}$ and $\VEC
u_{1:t-1}$ and $(b)$ follows from Proposition~\ref{prop:independent}.
Substituting~\eqref{eq:CMP-2} in~\eqref{eq:CMP-1}, we get
\begin{align}
 \hskip 2em & \hskip -2em  
 \PR(\tilde r^i_{t+1} \,|\, r^i_{1:t}, u^i_{1:t}) 
 = \PR(\tilde x^i_{t+1} \,|\, x^i_t, \tilde z_t, \tilde {\VEC u}_t) 
    \cdot
    \PR(\tilde z_{t+1} \,|\, \tilde z_t, \tilde {\VEC u}_t)
    \cdot
    \IND[ \tilde {\VEC u}_{1:t-1} = \VEC u_{1:t-1} ] 
    \notag \\
 & \hskip 6em 
    \cdot
    \IND[ \tilde z_{1:t} = z_{1:t} ] 
 \cdot \IND[\tilde u^i_t = u^i_t] \cdot
   \PR(\tilde {\VEC u}^{-i}_t \,|\, z_{1:t}, \VEC u_{1:t-1})\notag \\
 &= \PR(\tilde x^i_{t+1}, \tilde z_{1:t+1}, \tilde {\VEC u}_{1:t} \,|\, x^i_{t}, u^i_{t}, 
 z_{1:t}, \VEC u_{1:t-1}) 
 = \PR(\tilde r^i_{t+1} \,|\, r^i_t, u^i_t)
\end{align}
This completes the proof of part~1) of the Lemma.

To prove part~2), it is sufficient to show that
\(
  \PR(\tilde z_t, \tilde {\VEC x}_{t}, \tilde {\VEC u}_{t} \,|\,
      r^i_{1:t}, u^i_{1:t}) =
  \PR(\tilde z_t, \tilde {\VEC x}_{t}, \tilde {\VEC u}_{t} \,|\,
      r^i_{t}, u^i_{t})
\). Consider
\begin{align}
  \hskip 2em & \hskip -2em  
  \PR(\tilde z_t, \tilde {\VEC x}_{t}, \tilde {\VEC u}_{t} \,|\,
      r^i_{1:t}, u^i_{1:t}) 
   =
  \IND[(\tilde z_t, \tilde x^i_{t} , \tilde u^i_t) = ( z_t, x^i_{t} ,  u^i_t)] 
  \cdot
  \PR( \tilde {\VEC x}^{-i}_{t}, \tilde {\VEC u}^{-i}_t \,|\,
       x^i_{1:t}, u^i_t, z_t, \VEC u_{1:t-1}) \notag \\
  &\stackrel{(c)}=
  \IND[(\tilde z_t, \tilde x^i_{t} , \tilde u^i_t) = ( z_t, x^i_{t} ,  u^i_t)] 
  \cdot
  \PR( \tilde {\VEC x}^{-i}_{t}, \tilde {\VEC u}^{-i}_t \,|\,
      z_{1:t}, \VEC u_{1:t-1}) \notag \\
  &= 
  \PR(\tilde {\VEC x}_{t}, \tilde {\VEC u}_{t} \,|\,
      r^i_{t}, u^i_{t}) 
  \label{eq:CMP-3}
\end{align}
where $(c)$ follows from an argument similar
to~\eqref{eq:CMP-2}.\footnote{Recall that $\VEC x^{-i}_t$ denotes the vector
$(x^1_t, \dots, x^{i-1}_t, x^{i+1}_t, \dots, x^n_t)$.} This completes
the proof of part~2) of the Lemma.

\section {Proof of Lemma~\ref{lemma:theta}} \label{app:theta-update}

Consider the system for a particular realization $(z_{1:T}, \VEC x_{1:T}, \VEC
u_{1:T}, \VEC d_{1:T})$ of $(Z_{1:T}, \VEC X_{1:T}, \VEC U_{1:T}, \VEC
D_{1:T})$. For ease of notation, we use $\PR(x^i_{t+1} \mid z_{1:t+1}, \VEC u_{1:t}, \VEC
d_{1:t})$ to denote $\PR(X^i_{t+1} = x^i_{t+1} \mid Z_{1:t+1} = z_{1:t+1}, \VEC
U_{1:t} = \VEC u_{1:t}, \VEC D_{1:t} = \VEC d_{1:t})$. Define
\begin{gather*}
  A(x^i_{t+1}, \VEC x_t, z_{1:t+1}, \VEC u_{1:t}, \VEC d_{1:t}) \DEFINED
  \PR(x^i_{t+1}, \VEC x_t, z_{t+1}, \VEC u_t \mid z_{1:t}, \VEC u_{1:t-1}, \VEC
  d_{1:t}); 
  \\
  B(x^i_{t+1}, \VEC x_t, z_{t+1}, z_t, \VEC d_t, \VEC \theta_t) \DEFINED
  \PR(x^i_{t+1} \mid x^i_t, z_t, \VEC u_t)
  \cdot
  \PR(z_{t+1} \mid \VEC x_t, z_t, \VEC u_t) 
  \cdot 
  \prod_{i=1}^n \theta^i_t(x^i_t).
\end{gather*}

The system dynamics and Proposition~\ref{prop:independent} implies that
\begin{equation} \label{eq:theta-1}
  A(x^i_{t+1}, \VEC x_t, z_{1:t+1}, \VEC u_{1:t}, \VEC d_{1:t}) 
  = B(x^i_{t+1}, \VEC x_t, z_{t+1}, z_t, \VEC d_t, \VEC \theta_t) \IND[u_t = d_t(\VEC x_t)]
\end{equation}
  
Consider component-$i$ of the realization $\VEC \theta_{t+1}$ of $\VEC \Theta_{t+1}$.
\begin{align}
  \theta^i_{t+1}(x^i_{t+1}) &= 
  \PR(x^i_{t+1} \mid z_{1:t+1}, \VEC u_{1:t}, \VEC d_{1:t}) 
  = \smashoperator[l]{\sum_{\{ \VEC x_t : d_t(\VEC x_t) = \VEC u_t\}} }
      \frac { A(x^i_{t+1}, \VEC x_t, z_{1:t+1}, \VEC u_{1:t}, \VEC d_{1:t}) }
            { \sum_{\tilde x^i_{t+1}} A(\tilde x^i_{t+1}, \VEC x_t, z_{1:t+1}, 
                    \VEC u_{1:t}, \VEC d_{1:t}) }
  \notag \\
 &\stackrel{(a)}=
  \smashoperator[l]{\sum_{\{ \VEC x_t : d_t(\VEC x_t) = \VEC u_t\}} }
  \frac{ B(x^i_{t+1}, \VEC x_t, z_{t+1}, z_t, \VEC d_t, \VEC \theta_t)}
       {\sum_{\tilde x^i_{t+1}} B(\tilde x^i_{t+1}, \VEC x_t, z_{t+1}, z_t, \VEC d_t, \VEC
       \theta_t)}
  \BYDEFINITION F^i_t(\VEC \theta_t, z_{t+1}, \VEC u_t, \VEC d_t)(x^i_{t+1})
  \label{eq:theta-2}
\end{align}
where $(a)$ follows from~\eqref{eq:theta-1}. Combining~\eqref{eq:theta-2} for all
$i$, $i=1,\dots,n$, proves the Lemma.

\section {Proof of Proposition~\ref{prop:partial-independent}}
\label{app:partial-independent}

The proof is similar to proof of Proposition~\ref{prop:independent}. As before,
for ease of notation, we use $\PR(z_{1:t}, \VEC x_{1:t}, \VEC y_{1:t}, \VEC
u_{1:t})$ to denote $\PR(Z_{1:t} = z_{1:t}, \VEC X_{1:t} = \VEC x_{1:t}, \VEC
Y_{1:t} = \VEC y_{1:t}, \VEC U_{1:t} = \VEC u_{1:t})$. Define
\begin{itemize}
  \item \(\alpha^i_t = \PR(u^i_t \mid z_{1:t}, y^i_{1:t}, \VEC u_{1:t-1})\), 
    \(\beta^i_t  = \PR(x^i_t \mid z_{t-1}, x^i_{t-1}, \VEC u_{1:t-1})\),
    \(\gamma_t   = \PR(z_t   \mid z_{t-1}, \VEC u_t)\), 
    \(\delta^i_t = \PR(y^i_t \mid x^i_t)\);
    and
  \item
    \(A^i_t \DEFINED \prod_{s=1}^t \alpha^i_s\),
    \(B^i_t \DEFINED \prod_{s=1}^t \beta^i_s\),
    \(\Gamma_t \DEFINED \prod_{s=1}^t \gamma_s\),
    \(\Delta^i_t \DEFINED \prod_{s=1}^t \delta^i_s\).
\end{itemize}

From the law of total probability, it follows that
\(
  \PR(z_{1:t}, \VEC x_{1:t}, \VEC y_{1:t}, \VEC u_{1:t})
  = big( \prod_{i=1}^n A^i_t B^i_t \Delta^i_t \big) \Gamma^i_t.
\)
Sum over the realizations of $\VEC y_{1:t}$ and observe that $A^i_t$ and
$\Delta^i_t$ depend on $\VEC y_{1:t}$ only through $y^i_{1:t}$. This gives,
\begin{align*}
  \PR(z_{1:t}, \VEC x_{1:t}, \VEC u_{1:t}) &=
  \sum_{y^1_{1:t}} \sum_{y^2_{1:t}} \cdots \sum_{y^n_{1:t}}
  \bigg( \prod_{i=1}^n A^i_t B^i_t \Delta^i_t \bigg) \Gamma_t 
  = \bigg( \prod_{i=1}^n \Big( \sum_{y^i_{1:t}} A^i_t \Delta^i_t \Big) B^i_t
  \bigg)  \Gamma_t
\end{align*}

Now sum over $\VEC x_{1:t}$ and observe that $B^i_t$ and $\Delta^i_t$ depend on
$\VEC x_{1:t}$ only through $x^i_{1:t}$. 
\begin{align*}
  \PR(z_{1:t}, \VEC u_{1:t}) &=
  \sum_{x^1_{1:t}} \sum_{x^2_{1:t}} \cdots \sum_{x^n_{1:t}}
   \bigg( \prod_{i=1}^n \Big( \sum_{y^i_{1:t}} A^i_t \Delta^i_t \Big) B^i_t
  \bigg)  \Gamma_t \\
  &= 
  \bigg( \prod_{i=1}^n \Big( \sum_{x^i_{1:t}} B^i_t \Big( \sum_{y^i_{1:t}} A^i_t \Delta^i_t \Big) 
  \Big)\bigg)  \Gamma_t 
\end{align*}
Thus, by Bayes rule, we get
\begin{equation} \label{eq:partial-1}
  \PR(\VEC x_{1:t} \mid z_{1:t}, \VEC u_{1:t}) = \prod_{s=1}^n
  \frac{ 
    B^i_t \Big( \sum_{y^i_{1:t}} A^i_t \Delta^i_t \Big)
  }{
   \sum_{x^i_{1:t}} B^i_t \Big( \sum_{y^i_{1:t}} A^i_t \Delta^i_t \Big)
  }
\end{equation}
Summing both sides over $x^i_{1:t}$, $i \neq j$, we get
\begin{equation} \label{eq:partial-2}
  \PR(x^j_{1:t} \mid z_{1:t}, \VEC u_{1:t}) = 
  \frac{ 
    B^j_t \Big( \sum_{y^j_{1:t}} A^j_t \Delta^j_t \Big)
  }{
   \sum_{x^j_{1:t}} B^j_t \Big( \sum_{y^j_{1:t}} A^j_t \Delta^j_t \Big)
  }
\end{equation}
The result follows from combining~\eqref{eq:partial-1} and~\eqref{eq:partial-2}.

\section {Proof of Proposition~\ref{prop:xi-independent}} \label{app:xi-independent}

Consider
\[
  \PR(\VEC \Xi_{1:t} \in \VEC E_{1:t} \mid z_{1:t}, \VEC u_{1:t})
  =
  \int_{\VEC E_{1:t}} \D \PR(\VEC \xi_{1:t} \mid z_{1:t}, \VEC u_{1:t})
\]

From Proposition~\ref{prop:partial-independent} and law of total probability, we
get
\begin{align*}
  \D \PR(\VEC \xi_{1:t} \mid z_{1:t}, \VEC u_{1:t}) 
  &=
  \sum_{\VEC x_{1:t}, \VEC y_{1:t}}
  \bigg(
    \prod_{i=1}^n
      \D \PR(\xi^i_t \mid y^i_{1:t}, z_{1:t}, \VEC u_{1:t})
      \cdot
      \PR(y^i_{1:t} \mid x^i_{1:t})
      \cdot
      \PR(x^i_{1:t} \mid z_{1:t}, \VEC u_{1:t})
  \bigg) 
  \notag \\
  &= 
    \prod_{i=1}^n
  \bigg(
    \sum_{\VEC x_{1:t}, \VEC y_{1:t}}
      \D \PR(\xi^i_t \mid y^i_{1:t}, z_{1:t}, \VEC u_{1:t})
      \cdot
      \PR(y^i_{1:t} \mid x^i_{1:t})
      \cdot
      \PR(x^i_{1:t} \mid z_{1:t}, \VEC u_{1:t})
    \bigg) 
\end{align*}
which completes the proof of the Proposition.

\section {Proof of Lemma~\ref{lemma:partial-CMP}} \label{app:partial-CMP}

For ease of notation, we use $\D \PR(\tilde r^{i}_{t+1} | r^i_{1:t},  
u^i_{1:t-1})$ to denote $\D \PR(R^i_{t+1} = \tilde r^i_{t+1} | R^i_{1:t} =
r^i_{1:t}, \allowbreak
U^i_{1:t-1} = u^i_{1:t-1})$ and a similar notation
for other probability measures. Consider
\begin{align}
  \D \PR(\tilde r^i_{t+1}\mid r^i_{1:t}, u^i_{1:t}) 
  &= 
  \smashoperator{\sum_{x^i_{t:t+1}, y^i_{t+1}}}
  \IND[\tilde \xi^i_{t+1} = \tilde F^i_t(\xi^i_t, y^i_{t+1}, \tilde z_t, \tilde
    {\VEC u}_t)]
  \cdot
  \PR(y^i_{t+1} \mid x^i_{t+1})
  \cdot
  \PR(x^i_{t+1} \mid x^i_t, \tilde z_t, \tilde {\VEC u}_t)
  \notag\\
  & \quad
  \cdot
  \PR(\tilde z_{t+1} \mid \tilde z_t, \tilde {\VEC u}_t) 
  \cdot
  \IND[\tilde z_{1:t} = z_{1:t}]
  \cdot 
  \IND[\tilde u^i_t = u^i_t] 
  \cdot
  \IND[\tilde {\VEC u}_{1:t-1} = \VEC u_{1:t-1}]
  \notag \\
  & \quad
  \cdot
  \xi^i_t(x^i_t)
  \cdot
  \PR(\VEC u^{-i}_t \mid 
  \xi^i_{1:t}, z_{1:t}, \VEC u_{1:t-1}, u^i_{t})
  \label{eq:PCMP-1}
\end{align}

Simplify the last term of~\eqref{eq:PCMP-1} as follows:
\begin{align}
  \hskip 2em & \hskip -2em  
  \PR(\VEC u^{-i}_t \mid 
  \xi^i_{1:t}, z_{1:t}, \VEC u_{1:t-1}, u^i_{t})
  \notag\\
  &=
  \smashoperator{\sum_{\VEC x^{-i}_{1:t}, \VEC y^{-i}_{1:t}}}
  \PR(\VEC u^{-i} \mid \VEC y^{-i}_{1:t}, z_{1:t}, \VEC u_{1:t-1})
  \cdot
  \PR(\VEC y^{-i}_{1:t} \mid \VEC x^{-i}_{1:t})
  \cdot
  \PR(\VEC x^{-i}_{1:t}\mid
  z_{1:t}, \VEC u_{1:t-1})
  \notag \\
  &=
  \PR(\VEC u^{-i}_t \mid z_{1:t}, \VEC u_{1:t-1})
  \label{eq:PCMP-2}
\end{align}

Substituting~\eqref{eq:PCMP-2} in~\eqref{eq:PCMP-1} and simplifying, we get
part~1) of the Lemma:
\begin{equation}
  \D \PR(\tilde r^i_{t+1} \mid r^i_{1:t}, u^i_{1:t}) 
  =
  \D \PR(\tilde r^i_{t+1} \mid r^i_{t}, u^i_{t}) 
\end{equation}

The proof of part 2) is similar to~\eqref{eq:CMP-3}.

\section {Proof of Theorem~\ref{thm:MAB}} \label{app:MAB}

We introduce a short hand notation that exploits the symmetry of the problem and
the fact that the reachable set $\ALPHABET R$ is countable. Define

\begin{align*}
  v^* &= v(1,1),      & v^0 &= v(p,p),      & v^n &= v(p, A^np),     \ n \in \naturalnumbers, & v^\infty &= v(p,1) \\
  a^* &= v_{10}(1,1), & a^0 &= v_{10}(p,p), & a^n &= v_{10}(p, A^np),\ n \in \naturalnumbers, & a^\infty &= v_{10}(p,1) \\
  b^* &= v_{01}(1,1), & b^0 &= v_{01}(p,p), & b^n &= v_{01}(p, A^np),\ n \in \naturalnumbers, & b^\infty &= v_{01}(p,1) \\
  c^* &= v_{11}(1,1), & c^0 &= v_{11}(p,p), & c^n &= v_{11}(p, A^np),\ n \in \naturalnumbers, & c^\infty &= v_{11}(p,1) 
\end{align*}
Notice that $v_{10}(A^n p, p) = v_{01}(p, A^np) = b^n$ and $v_{01}(A^np, p) =
v_{10}(p, A^np) = a^n$. 

With the above notation, the dynamic program of Proposition~\ref{prop:MAB-DP} can be written as
\begin{equation} \label{eq:DP*}
  v^n + J^* = \max\{a^n, b^n, c^n \}, \quad n \in \{*,0,1,2,\dots,\infty\}
\end{equation}
where
\begin{subequations}\label{eq:DP**}
\begin{align}
  a^*      &= 1 + v^\infty, & b^*      &= 1 + v^\infty, & c^*      &= v^* \\
  a^0      &= p + v^1     , & b^0      &= p + v^1     , & c^0      &= 2p(1-p) + p^2v^* + (1-p^2)v^0 \\
  a^n      &= p + v^{n+1} , & b^n      &= A^n p + v^1 , & c^n      &=
      \begin{multlined}[t]
        p + A^n p - 2p \cdot A^np + p \cdot A^np \cdot v^* \\
        + (1-p \cdot A^np) v^0
      \end{multlined} \\
  a^\infty &= p + v^\infty, & b^\infty &= 1 + v^1     , & c^\infty &= 1-p + pv^* + (1-p)v^0
\end{align}
\end{subequations}
\begin{lemma} \label{lemma:DP}
  A solution of the fixed point equations of~\eqref{eq:DP**} is given by the
  following:
  \begin{enumerate}
    \item For $p \in (\tau,1]$
      \begin{equation}
        J^* = Ap, \quad v^* = 2 - Ap, \quad v^0 = p, \quad v^n = A^np, \ n \in
        \naturalnumbers, \quad v^\infty = 1.
      \end{equation}
    \item For $p \in (\alpha_1, \tau]$
      \begin{equation}
        J^* = Ap, \quad v^* = 2 - Ap, \quad v^0 = 1 - Ap, \quad v^n = A^np, \ n \in
        \naturalnumbers, \quad v^\infty = 1.
      \end{equation}
    \item For $p \in (\alpha_{m+1}, \alpha_m]$, $m \in \naturalnumbers$,
      define $\zeta(x) = 1 + x^2 + x^3$. Then,  
      \begin{subequations}
        \begin{equation}
          J^* = p \left(1 - \frac{\varphi_0(p)}{\zeta(p)} \right), \quad 
          v^* = 2 - J^*, \quad 
          v^0 = 2 - p - \frac{1+(1-p)^2}{\zeta(p)},
          \quad v^\infty = 1.
        \end{equation}
        and
        \begin{equation}
          v^n = \begin{cases}
            w^n, & \text{if $n \le m$}, \\
            A^n p, & \text{if $n > m$};
          \end{cases}
          \quad n \in
          \naturalnumbers, 
        \end{equation}
        where
        \begin{equation*}
          w^n = (1-p) A^{n-1}p \left(\frac {J^*}p - (1 - p)\right) + v^0 
        \end{equation*}
        Note that $w^1 = J^*$.
      \end{subequations}
  \end{enumerate}
\end{lemma}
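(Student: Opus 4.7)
The plan is a guess-and-verify argument: in each of the three parameter regimes I substitute the candidate values from the lemma statement into \eqref{eq:DP**}, compute $a^n$, $b^n$, and $c^n$ at every state $n \in \{*,0,1,2,\dots,\infty\}$, and verify the Bellman equation $v^n + J^* = \max\{a^n,b^n,c^n\}$ from \eqref{eq:DP*}. The symmetry $v(q^1,q^2) = v(q^2,q^1)$ makes $a^n = b^n$ at $n \in \{*,0,\infty\}$ and, elsewhere, identifies the comparison at $(p, A^n p)$ with the corresponding one at $(A^n p, p)$. This reduces the task at each state to at most two polynomial inequalities in $p$.

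For Case~1 ($p > \tau$), the candidate strategy plays an asymmetric action at every state. The values $v^n = A^n p$ are precisely the sequence closing $v^n + J^* = a^n = p + v^{n+1}$ once $J^* = Ap$ is fixed by the symmetric two-state loop $(p, Ap) \leftrightarrow (Ap, p)$. The only non-trivial check is $a^n \ge c^n$; its binding instance is at $n=0$, where direct substitution yields $a^0 - c^0 = -p^2[(1-p)^2 - p]$, so $a^0 \ge c^0$ iff $p \ge (1-p)^2$, i.e.\ $p \ge \tau$. This is exactly the case boundary. Case~2 ($\alpha_1 < p \le \tau$) uses the same asymmetric strategy \emph{except} at $(p,p)$, where collision is now optimal (the Case~1 inequality fails); the Bellman equation at $n=0$ with action $(1,1)$ gives $c^0 = 1$ in closed form, whence $v^0 = 1 - Ap$. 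The new binding inequality appears at $n=1$, where the asymmetric action at $(p, Ap)$ must still beat collision: a direct computation shows $b^1 - c^1$ is a positive multiple of $\varphi_1(p)$, so $b^1 \ge c^1$ is equivalent to $p \ge \alpha_1$, the other case boundary.

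Case~3 ($\alpha_{m+1} < p \le \alpha_m$, $m \ge 1$) is the main obstacle. The candidate strategy plays $(1,1)$ at states $(p, A^n p)$ for $n \le m$ and an asymmetric action for $n > m$. I would first close a small linear system: the Bellman equations at $n = 0, 1, \dots, m$ (with action $(1,1)$), at $n = m+1$ (with $(1,0)$), at $n = \infty$, and at $n = *$ couple the unknowns $(J^*, v^*, v^0, \dots, v^{m+1}, v^\infty)$. Iterating the $(1,1)$-recursion for $n = 0, 1, \dots, m-1$ produces the telescoped expressions $v^n = w^n$ as affine functions of $J^*$ and $v^0$, leaving four equations in the four unknowns $(J^*, v^*, v^0, v^\infty)$; solving gives $v^\infty = 1$, $v^* = 2 - J^*$, $J^* = p(1 - \varphi_0(p)/\zeta(p))$, and the stated formula for $v^0$. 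The core verification is the pair of inequalities $c^n \ge \max(a^n, b^n)$ for $n \le m$ and $\max(a^n, b^n) \ge c^n$ for $n > m$. Direct algebra shows that the controlling sign in each comparison equals the sign of $\varphi_n(p)$, and then the recursion $\varphi_{n+1}(x) = (1-x)\varphi_n(x) + x(1+(1-x)^2)$ together with the monotonicity of $\{\alpha_n\}$ proved in the excerpt gives $\varphi_n(p) \ge 0 \iff p \ge \alpha_n$, which is exactly the switching at $n = m$ when $p \in (\alpha_{m+1}, \alpha_m]$. The remaining checks at $(1,1)$ and $(p,1)$ are immediate from the choices $v^* = 2 - J^*$ and $v^\infty = 1$.
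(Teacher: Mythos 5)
Your overall strategy---substitute the candidate values into \eqref{eq:DP**} and verify $v^n + J^* = \max\{a^n,b^n,c^n\}$ state by state, with the case boundaries emerging from the signs of $\varphi_n(p)$ and the monotonicity of $\{\alpha_n\}$---is exactly the paper's proof. However, two of your concrete claims are false, and each would make the verification fail as written. First, in Case~1 the values do \emph{not} close via $v^n + J^* = a^n = p + v^{n+1}$: with $v^n = A^n p$ and $J^* = Ap$ one has $v^n + J^* = A^n p + Ap = b^n$, whereas $a^n = p + A^{n+1}p$ and $b^n - a^n = p(1-p)A^{n-1}p > 0$ for $n \in \naturalnumbers$. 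The maximizer at $(p, A^n p)$, $n \ge 1$, is the action $(0,1)$ (the user with the larger belief transmits), not $(1,0)$; and if you tried to generate the sequence from your $a$-recursion $v^{n+1} = v^n + (J^* - p)$ you would get values growing linearly in $n$, which differ from $A^n p$ already at $n=2$ and are incompatible with $v^\infty = 1$. Second, symmetry does not give $a^\infty = b^\infty$. The state $(p,1)$ is not symmetric; symmetry only yields $v_{10}(q^1,q^2) = v_{01}(q^2,q^1)$, which relates values at $(p,1)$ to values at $(1,p)$. From \eqref{eq:DP**}, $a^\infty = p + v^\infty$ while $b^\infty = 1 + v^1$, and in Case~1 $b^\infty - a^\infty = Ap - p > 0$. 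Under your claimed identity the right-hand side at $(p,1)$ would be $\max\{p + v^\infty, c^\infty\} = 1 + p$, which is strictly less than $v^\infty + J^* = 1 + Ap$, so you would wrongly conclude that the lemma's values do not solve the fixed point equations.

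The same transposition propagates into your Case~3 plan: the equation that closes the linear system at $n = m+1$ is $v^{m+1} + J^* = b^{m+1} = A^{m+1}p + v^1$ (action $(0,1)$), not an equation for action $(1,0)$; the latter is $a^{m+1} = p + v^{m+2}$, which introduces a further unknown, so the telescoping you describe would not terminate. Once the maximizers are corrected---$b^n$ at the asymmetric states $(p, A^np)$, $n \ge 1$, and at $(p,1)$; $a=b$ only at the genuinely symmetric states $(1,1)$ and $(p,p)$---your plan coincides with the paper's: the binding comparisons are $a^0$ versus $c^0$ (giving $\tau$), $b^1$ versus $c^1$ (giving $\alpha_1$, since $b^1 - c^1 = \varphi_1(p)$ in Case~2), and $b^n$ versus $c^n$ with sign $\pm p^2\varphi_n(p)/\zeta(p)$ in Case~3. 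One further small correction there: the comparisons of $c^n$ against $a^n$ in Case~3 are governed by $\varphi_0(p)$ (times a positive factor), not by $\varphi_n(p)$; only the $b$-versus-$c$ comparisons carry the index-dependent threshold $\alpha_n$.
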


The proof follows from elementary algebra. For completeness, we include the
details for each case below.

\subsection* {Case 1: $p \in (\tau, 1]$}

We show that the values of Lemma~\ref{lemma:DP} satisfy the dynamic program
of~\eqref{eq:DP*} and~\eqref{eq:DP**}, by considering the four cases separately.
\begin{enumerate}
  \item $a^* = b^* = 2$, $c^* = 2 - Ap$. Hence, 
    either action $(0,1)$ or $(1,0)$ is optimal at state $(1,1)$ and $v^* + J^*
    = a^* = b^* = 2$.
  \item $a^0 = b^0 = p + Ap$, and $b^0 - c^0 = p^2(p - (1-p)^2)$ which is
    positive for $p > \tau$. Recall that $\tau$ is the root of $x = (1-x)^2$.
    Hence, either action $(0,1)$ or $(1,0)$ is optimal at state $(p,p)$ and $v^0
    + J^* = a^0 = b^0 = p + Ap$.
  \item Consider $n \in \naturalnumbers$. $a^n = p + A^{n+1}p$ and $b^n = A^np + Ap$. Thus, $b^n - a^n =
    p(1-p)\cdot A^{n-1}p \ge 0$. Moreover, $b^n - c^n = p^2[(3-p)A^np - 1] >
    p^2[(3-p)p - 1] $ which is positive for $p \in (\tau, 1]$. Thus, the action
    $(0,1)$ is optimal at state $(p, A^np)$ (and by symmetry, the action $(1,0)$
    is optimal at state $(A^np,p)$) for $n \in \naturalnumbers$ and $v^n + J^* =
    b^n = A^np + Ap$.
  \item $a^\infty = 1 + p$, $b^\infty = 1 + Ap$. Thus, $b^\infty > a^\infty$.
    Moreover, $b^\infty - c^\infty = p \cdot Ap \ge 0$. Thus, the action $(0,1)$ is
    optimal at state $(p,1)$ (and by symmetry, the action $(1,0)$ is optimal at
    state $(1,p)$), and $v^\infty + J^* = b^\infty = 1 + Ap$.
\end{enumerate}

\subsection* {Case 2: $p \in (\alpha_1, \tau]$}

We show that the values of Lemma~\ref{lemma:DP} satisfy the dynamic program
of~\eqref{eq:DP*} and~\eqref{eq:DP**}, by considering the four cases separately.
\begin{enumerate}
  \item $a^* = b^* = 2$, $c^* = 2 - Ap$. Hence, 
    either action $(0,1)$ or $(1,0)$ is optimal at state $(1,1)$ and $v^* + J^*
    = a^* = b^* = 2$.
  \item $a^0 = b^0 = p + Ap$ and $c^0 = 1$. Thus, $c^0 - a^0 = (1-p)^2 - p \ge
    0$ for $p \in (\alpha_1, \tau]$. Recall that $\tau$ is the root of $x = (1-x)^2$.
    Hence, action $(1,1)$ is optimal at state $(p,p)$ and $v^0
    + J^* = c^0 = 1$.
  \item Consider $n \in \naturalnumbers$. $a^n = p + A^{n+1}p$ and $b^n = A^np + Ap$. Thus, $b^n - a^n =
    p(1-p)\cdot A^{n-1}p \ge 0$. Moreover, $c^n = (1-p)A^np + p - Ap + 1$ and,
    thus, $b^n - c^n = 2Ap + p \cdot A^np - p -1 > 2Ap + p \cdot A p - p - 1 =
    \varphi_1(p)$, which is positive for $p \in (\alpha_1, t]$. 
    Thus, the action $(0,1)$ is optimal at state $(p, A^np)$ (and by symmetry,
    the action $(1,0)$ is optimal at state $(A^np,p)$) for $n \in
    \naturalnumbers$ and $v^n + J^* = b^n = A^np + Ap$.
  \item $a^\infty = 1 + p$, $b^\infty = 1 + Ap$. Thus, $b^\infty > a^\infty$.
    Furthermore, $c^\infty = 2 - Ap$ and $b^\infty - c^\infty = 2Ap - 1 = -
    \varphi_0(1-p)$, which
    is positive for $p > 1 - \alpha_0$. Since $\alpha_1 > 1 - \alpha_0$, we have
    that for $p \in (\alpha_1, \tau]$, the
    action $(0,1)$ is optimal at state $(p,1)$ (and by symmetry, the action
    $(1,0)$ is optimal at state $(1,p)$) and $v^\infty + J^\infty = b^\infty =
    1 + Ap$. 
\end{enumerate}

\subsection* {Case 3: $p \in (\alpha_{m+1}, \alpha_{m}]$, $m \in \naturalnumbers$}

We show that the values of Lemma~\ref{lemma:DP} satisfy the dynamic program
of~\eqref{eq:DP*} and~\eqref{eq:DP**}, by considering each case separately.
Recall $\zeta(x) = 1 + x^2 + x^3$. 
\begin{enumerate}
  \item $a^* = b^* = 2$ and $c^* = 2 - J^*$. Hence, 
    either action $(0,1)$ or $(1,0)$ is optimal at state $(1,1)$ and $v^* + J^*
    = a^* = b^* = 2$.
  \item $a^0 = b^0 = p + w^1$ and $c^0 = 2p - p^2 J^* + (1-p^2) v^0$.
    Hence, $a^0 - c^0 = - p^2 \varphi_0(p)/\zeta(p)$ which is positive for $p <
    \alpha_0$. Thus, for $p \in (\alpha_{m+1}, \alpha_m]$, either action $(0,1)$
    or $(1,0)$ is optimal at state $(p,p)$ and $v^0 + J^* = c^0$
  \item Consider $n < m$. Then, $c^n - a^n = -p A^np \varphi_0(p)/\zeta(p)$, which is
    positive for $p \in [0,\alpha_0]$ and thus for $p \in (\alpha_{m+1},
    \alpha_m]$. Moreover, $c^n - b^n = - p^2
    \varphi_n(p)/\zeta(p)$ which is positive for $p \in (0, \alpha_n]$ and hence
    for $p \in (\alpha_{m+1}, \alpha_m]$. (Recall that $\alpha_n$ forms an
    decreasing sequence.). Thus, the action $(1,1)$ is optimal at state $(p,
    A^np)$ for $n < m$ (and by symmetry for state $(A^np, p)$ for $n < m$) and
    $v^n + J^* = c^n$.
  \item Consider $n = m$. Then, $a^m = p + A^{m+1}p$ and $c^m - a^m =
    -(1-p)^m\varphi_1(p) + (1-2p-p^3)$. The first term is positive for $p \in
    [0, \alpha_m]$ and since the second term is larger than $\varphi_1(p)$, the
    second term is also positive in that interval. Moreover $c^m - b^m = - p^2
    \varphi_n(p)/\zeta(p)$ which is positive for $p \in [0,\alpha_m]$. Thus, both
    terms are positive for $p \in (\alpha_{m+1}, \alpha_m]$. Hence, the action
    $(1,1)$ is optimal at the state $(p, A^mp)$ (and by symmetry for the state
    $(A^mp,p)$) and $v^m + J^* = c^m$. 
  \item Consider $n > m$. Then, $a^n = p + A^{n+1}p$ and $b^n = A^n p + J^*$.
    Thus, $b^n - a^n = - p \varphi_0(p)/\zeta(p) - p(1-p)^{n+1} > -p
    \varphi_0(p)/\varphi - p (1-p) = -p^2 \varphi_1(p)/\varphi$ 
    which is positive for $p \in [0, \alpha_1]$ and the second term is always positive. Moreover,
    $b^n - c^n = p^2 \varphi_n(p)/\zeta(p)$ which is positive for $p \in
    (\alpha_n, 1]$ and hence for $p \in (\alpha_{m+1}, \alpha_m] \subset
    (\alpha_n, 1]$. (Recall that $\alpha_n$ forms an decreasing sequence.).
    Thus, the action $(0,1)$ is optimal at state $(p, A^np)$, for $n > m$ (and
    by symmetry the action $(1,0)$ is optimal at state $(A^np, p)$ for $n > m$)
    and $v^n + J^* = b^n$.
  \item $a^\infty = p + J^*$ and $b^\infty = 1 + w^1$. Thus, $b^\infty -
    a^\infty = -p\varphi_0(p)/\zeta(p)$, which is positive for $p <
    \alpha_0$. Moreover, $c^\infty = 1 + p - pJ^* + (1-p)v^0$ and, thus,
    $b^\infty - c^\infty = p^2 (1 + (1-p)^2)/\zeta(p)$ which is always positive.
    Thus, for $p \in (\alpha_{m+1}, \alpha_m]$, the action $(0,1)$ is
    optimal at state $(p,1)$ (and by symmetry the action $(1,0)$ is optimal at
    state $(1,p)$) and ${v^\infty + J^\infty = b^\infty = 2J^*}$.
\end{enumerate}


\end{document}